\documentclass[letterpaper]{article} 
\usepackage{aaai25}  
\usepackage{times}  
\usepackage{helvet}  
\usepackage{courier}  
\usepackage[hyphens]{url}  
\usepackage{graphicx} 
\urlstyle{rm} 
\usepackage{natbib}  
\usepackage{caption} 
\frenchspacing  
\setlength{\pdfpagewidth}{8.5in}  
\setlength{\pdfpageheight}{11in}  
%
%
\pdfinfo{
/TemplateVersion (2025.1)
}

\setcounter{secnumdepth}{2} 

%



\title{Decentralized Convergence to Equilibrium Prices in Trading Networks}
\author{
    Edwin Lock\textsuperscript{\rm 1},
    Benjamin Patrick Evans\textsuperscript{\rm 2},
    Eleonora Kreacic\textsuperscript{\rm 2},
    Sujay Bhatt\textsuperscript{\rm 3},
    Alec Koppel\textsuperscript{\rm 3},\\
    Sumitra Ganesh\textsuperscript{\rm 3},
    Paul W. Goldberg\textsuperscript{\rm 1}
}
\affiliations{
    \textsuperscript{\rm 1}Department of Computer Science\\
    University of Oxford\\
    Parks Road\\
    Oxford, OX1 3QD, UK\\
    \{edwin.lock, paul.goldberg\}@cs.ox.ac.uk\\
    \textsuperscript{\rm 2}JPMorgan AI Research\\
    London, UK\\
    \textsuperscript{\rm 3}JPMorgan AI Research\\
    New York, USA\\
    \{benjamin.x.evans, eleonora.kreacic, sumitra.ganesh\}@jpmorgan.com,\\
    \{sujay.bhatt, alec.koppel\}@jpmchase.com
}

\usepackage{graphicx}
\usepackage[USenglish]{babel}
\usepackage{parskip}
\usepackage{amsmath, amssymb, amsthm, mathtools}
\usepackage{mathtools}
\usepackage{bm}
\usepackage[capitalize]{cleveref}
\usepackage{enumerate}
\usepackage[ruled]{algorithm}
\usepackage[noend]{algpseudocode}  
\usepackage{caption}
\usepackage{subcaption}
\usepackage{booktabs}
\usepackage{tikz}
\usepackage{tikz-network}

\tikzset{buying signal/.style={font=\small, anchor=south west, pos=0}}
\tikzset{trade label/.style={font=\normalsize, anchor=north, pos=0.5}}
\tikzset{selling signal/.style={very near end, font=\small, anchor=south east, pos=1}}
\tikzset{agent/.style={circle, draw=black, fill=white, inner sep=0pt, outer sep=4pt, minimum size=5pt}}

\DeclareMathOperator{\argmax}{argmax}
\newcommand{\pb}{\bm{p}}
\newcommand{\offers}{\bm{\sigma}}
\newcommand{\offer}{{\sigma}}

\newcommand{\Z}{\mathbb{Z}}

\newcommand{\market}{\mathcal{M}}

\newtheorem{theorem}{Theorem}

\newtheorem{lemma}[theorem]{Lemma}
\newtheorem{proposition}[theorem]{Proposition}
\newtheorem{corollary}[theorem]{Corollary}
\theoremstyle{definition}

\newtheorem{definition}[theorem]{Definition}
\newtheorem{example}[theorem]{Example}
\newtheorem{observation}[theorem]{Observation}
\newtheorem{conjecture}[theorem]{Conjecture}

\usepackage{ifthen}
\newboolean{commentsactivated}
\setboolean{commentsactivated}{true}
\newcommand{\el}[1]{\ifthenelse{\boolean{commentsactivated}}{{\color{blue} {\em EL: #1 }}}{}}

\definecolor{darkgreen}{rgb}{0,0.5,0}
\newcommand{\pwg}[1]{\ifthenelse{\boolean{commentsactivated}}{{\color{darkgreen} {\em PG: #1 }}}{}}

\newcommand{\preprint}{}

\ifx\preprint\undefined
\else
\nocopyright
\fi

\begin{document}

\maketitle

\begin{abstract}
We propose a decentralized market model in which agents can negotiate bilateral contracts. This builds on a similar, but centralized, model of trading networks introduced by Hatfield et al.~in 2013. Prior work has established that fully-substitutable preferences guarantee the existence of competitive equilibria which can be centrally computed. Our motivation comes from the fact that prices in markets such as over-the-counter markets and used car markets arise from \textit{decentralized} negotiation among agents, which has left open an important question as to whether equilibrium prices can emerge from agent-to-agent bilateral negotiations. We design a best response dynamic intended to capture such negotiations between market participants. We assume fully substitutable preferences for market participants. In this setting, we provide proofs of convergence for sparse markets (covering many real world markets of interest), and experimental results for more general cases, demonstrating that prices indeed reach equilibrium, quickly, via bilateral negotiations.  Our best response dynamic, and its convergence behavior, forms an important first step in understanding how decentralized markets reach, and retain, equilibrium.
\end{abstract}

\section{Introduction}

Price discovery is a central feature of modern markets. In many real-world markets, prices arise from a decentralized process governed by negotiations between market participants. By contrast, market outcomes are classically studied from the perspective of `static' optimality criteria such as competitive equilibrium or stability, or viewed through the lens of centralized market procedures (such as auctions and clearing houses). In particular, this does not address the question of when, and how, desirable market outcomes can arise from \textit{decentralized} market processes.

This paper introduces a `best response' market dynamic that captures uncoordinated negotiations between neighbors in a network of market participants. We work in a general model of trading introduced by \citet{hatfield2013stability}, in which a network of heterogeneous agents can engage in bilateral trades for indivisible goods or services. Agents can act as buyer, sellers, or as traders involved in some trades as a buyer and in other trades as a seller.   Our model allows for cycles of trades, and for `horizontal' as well as `vertical' \citep{ostrovsky2008stability} network structures. The model also embeds well-known settings such as labor markets \citep{kelso1982job}, and general exchange economies with heterogeneous indivisible goods \citep{gul1999walrasian,yang2000equilibrium}, capturing a broad range of market configurations.

Like \citeauthor{hatfield2013stability}, we focus primarily on agents with fully-substitutable (FS) preferences. Our central point of departure from \citet{hatfield2013stability} is that each agent maintains an offer for each of their trades, so each trade has a buying \textit{and} selling offer. In our market dynamic, agents 
take turns to update their offers in a manner that maximizes their utility. The dynamic ends in equilibrium if there is no agent who wishes to modify her offers any further (because her current offers already maximize her utility); at this point the buyer and seller of each trade agree to execute this trade iff their two offers coincide.

Trading networks have received increased attention in recent years \citep{morstyn2018bilateral,https://doi.org/10.3982/TE3240,osogami2023learning}. Prior work has focused on establishing the existence of competitive equilibrium prices and stable outcomes in markets of increasing generality under the condition that agents have substitutes preferences (see, e.g., \citealt{kelso1982job,gul1999walrasian,jagadeesan2021matching,baldwin2023equilibrium}). In the model of \citet{hatfield2013stability}, the authors show that competitive equilibra are guaranteed to exist if agents have {fully-substitutable} preferences. Moreover, an equilibrium is efficiently computable via an ascending-price algorithm. This algorithm can be run, e.g., by a central exchange to centrally compute prices if it has access to all agents' preferences.

But in many real world markets, such as over-the-counter financial markets and peer-to-peer markets, prices are not computed centrally, and preferences are not public. Instead, prices arise from uncoordinated bilateral negotiations of agents with their neighbors in a trading network. Such decentralized processes have been studied in the context of specialized two-sided markets; e.g., the labor markets \citep{chen2016decentralized,chen2016random} and two-sided markets of buyers and sellers with unit demand for identical items \citep{assadi2017fast}. While the decentralized market dynamics proposed in these works converge to competitive equilibrium, they are not immediately generalizable to markets on more general networks. Moreover, to the best of our knowledge, no similar results are known for market topologies beyond two-sided markets. 

This leaves open the important research questions of formulating a decentralized market dynamic that captures the notion of negotiations between neighboring agents in general trading networks, and understanding the dynamic's \mbox{(non-)}convergence to an equilibrium.

\paragraph{Main Contributions.}
Extensive experiments suggest that our proposed market dynamic converges to equilibrium for any \emph{market with fully substitutable agents (FSM)}. \Cref{section:theory} develops theoretical results to support this observation. 
Our main technical contribution is a reduction from markets with arbitrarily many agents to 2-agent markets (\cref{proposition:convergence-guarantees}). This reduction guarantees that FSMs with $m$-sparse networks converge (almost surely) to equilibrium iff FSMs with two agents and $m$ trades converge (almost surely).%
\footnote{A network is \textit{$m$-sparse} if every induced subgraph of the network can be divided into two or more disjoint components by removing at most $m$ edges. See \cref{definition:sparse} for details.}

As we show that 2-agent FSMs with one or two trades converge (\cref{proposition:two-agents-convergence}), our reduction thus implies that markets with tree topologies and 2-sparse FSMs reach equilibrium under our market dynamic. We conjecture, based on our experimental evidence, that $m$-sparse FSMs converge for any $m > 2$.
Surprisingly, we see that full substitutability is not required to guarantee convergence for markets with tree topologies; agents can have arbitrary quasilinear preferences (\cref{theorem:market-convergence}). By contrast, we give a counterexample of a two-agent market with a non-fully-substitutable agent for which the dynamic does not terminate. 

\Cref{section:experiments} complements our theoretical results with multi-agent simulations to develop a better quantitative understanding of convergence in our market dynamic.%
\footnote{The code for our simulations is available upon request from the authors.}
Specifically, we demonstrate the convergence, and speed of convergence, in markets with tree topologies, and move beyond the theoretical results to also demonstrate convergence in general (graph-based) markets. We explore numerous market configurations, including various market sizes and agent compositions, to analyze convergence and equilibrium properties. Our experiments demonstrate that equilibrium is reached {much faster than the identified bounds}. The experiments also highlight natural and desirable properties of our market dynamic, such as a general increase in welfare and satisfaction as we reach equilibrium, {and the increase in utility as competition reduces (and vice versa)}. We also show {how shocks propagate through the market, with knock-on effects before converging to a new equilibria, with the convergence speed governed by the severity of the shock.}

The remainder of the paper is presented as follows. \cref{section:model} proposes the model, \cref{section:theory} provides theoretical proofs, and \cref{section:experiments} analyzes the simulation results. Conclusions and next steps are presented in \cref{section:conclusion}.

\section{Model}
\label{section:model}
A market $\market = (I, \Omega, v)$ consists of a finite set $I$ of agents, a set $\Omega$ of possible bilateral trades and agent valuations $v = (v^i)_{i \in I}$. The trades $\Omega$ can represent any goods or services (such as the sale of a batch of coffee beans, an insurance contract, or a spectrum license), and are typically heterogeneous. Each trade $\omega$ has a single buyer $b(\omega) \in I$ and a single, distinct, seller $s(\omega) \in I$. Subsets of $\Omega$ are \textit{bundles}. $\Omega$ can contain multiple trades with the same buyer and seller, and an agent can be a buyer for some trades and a seller for others. So $\Omega$ defines a directed multi-graph on vertices $I$ in which every trade is an arc. \Cref{example:coffee-network} illustrates our model.

\begin{example}
\label{example:coffee-network}
The coffee supply chain consists of coffee farms, coffee roasters, coffee shops and supermarkets, forming the different types of agents. Roasters source raw beans from farms and process them to various varieties of blends, whilst coffee shops and supermarkets can source beans both from the roasters (processed blends) or farms (raw), representing the goods. A coffee shop/supermarket reaches out to farms and roasters in its network in order to obtain their asking price, and subsequently provides a set of buying offers (the trades). \Cref{fig:coffee} shows a concrete example. Notice the heterogeneous nature of goods the roaster trades: it acts as a buyer of raw product and a seller of various blends, as well as the heterogeneity of the supermarket and coffee shop which have different supplier networks and (private) preferences.
\end{example}

For any agent $i \in I$ and bundle $\Psi \subseteq \Omega$ of trades, $\Psi_i$ denotes the trades in $\Psi$ that involve agent $i$; and $\Psi_{i \rightarrow}$ and $\Psi_{i \leftarrow}$ respectively denote the agent's `selling' and `buying' trades in $\Psi_i$. For each agent $i$ and trade $\omega$, let $\chi^i_\omega=1$ if $i$ is the buyer of $\omega$, $\chi^i_{\omega}=-1$ if $i$ is the seller, and $\chi^i_{\omega}=0$ otherwise.

\begin{figure}[!tb]
\includegraphics[width=.95\columnwidth]{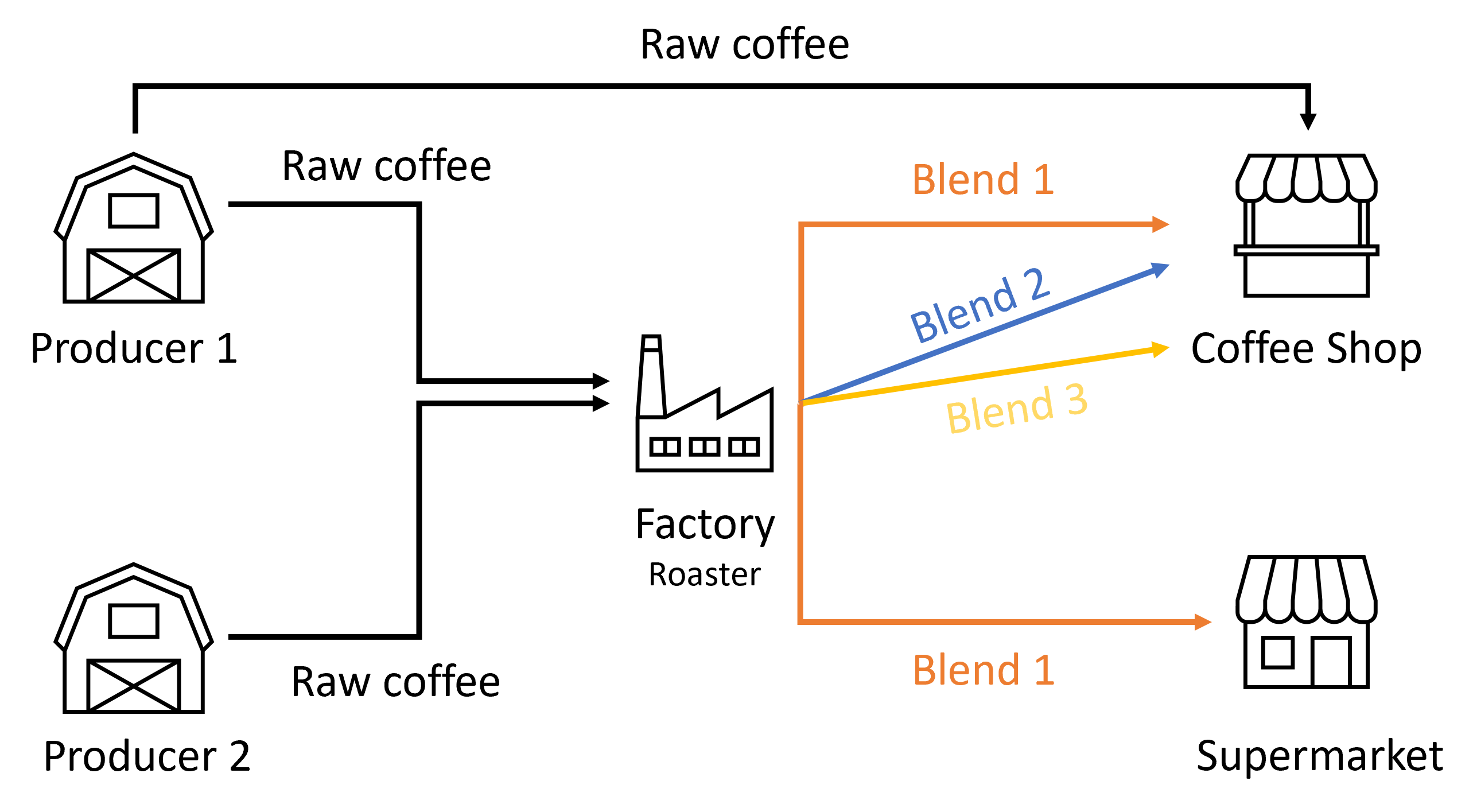}
\caption{An illustration of a market with two producers (coffee farms), one intermediary (a roaster), and two buyers (coffee shop and supermarket).}
\label{fig:coffee}
\end{figure}

Each agent $i$ maintains an integer \textit{offer} $\offer^i_\omega$ for each of her trades~$\omega \in \Omega_i$. The offer $\offer^{b(\omega)}_\omega$ is an offer by the buyer $b(\omega)$ to \textit{buy}, while $\offer^{s(\omega)}_\omega$ is an offer by the seller $s(\omega)$ to \textit{sell}. For any agent $i$ and trade $\omega \in \Omega_i$, we write the offer of the other agent---its \textit{counterpart}---as $\offer^{-i}_{\omega}$. Each agent interprets her counterparts' offers as \textit{prices} $\pb \in \Z^{\Omega_i}$ for trades $\Omega_i$. 

\subsection{Preferences}
\label{section:preferences}
As in \citet{hatfield2013stability}, agents have heterogeneous and private preferences. An agent's private \textit{valuation function} $v^i$ maps every possible bundle $\Psi \subseteq \Omega_i$ of her trades to an integer value,
and the empty bundle $\emptyset$ to $0$.%
\footnote{Values for certain bundles can be negative, e.g.~when they represent production costs for a seller or intermediary agent, as illustrated in \cref{example:coffee-network}.}
The utility $u^i(\Psi, \pb)$ an agent has for bundle $\Psi$ at prices $\pb \in \Z^{\Omega_i}$ is the agent's value for $\Psi$ plus the income from selling trades $\Psi_{i \rightarrow }$ minus the spending on buying trades $\Psi_{i \leftarrow}$:%
\footnote{Heterogeneous preferences and quasilinear utilities are standard in the literature on substitutes markets and auctions \citep{hatfield2013stability,kelso1982job,Aus06}.}
\begin{equation}
\label{eq:utility]}
u^i(\Psi, \pb) \coloneqq v^i(\Psi) - \sum_{\omega \in \Psi} \chi^i_\omega p_\omega.
\end{equation}
This gives rise to an agent's demand $D^i$ consisting of the bundle $\Psi \subseteq \Omega_i$ that maximizes her utility at prices $\pb$:%
\footnote{If an agent $i$ is indifferent between multiple bundles at $\pb$, we assume a consistent tie-breaking rule so that her demanded bundle is unique.}
\begin{equation}
\label{eq:demand}
D^i(\pb) \coloneqq \argmax_{\Psi \subseteq \Omega_i} u^i(\Psi, \pb).
\end{equation}
If an agent $i$ is indifferent between multiple bundles at $\pb$, we assume a consistent tie-breaking rule so that her demanded bundle is unique. In our experiments, we do this by perturbing $v^i$ to achieve unique demand at any integer prices.

We focus on agents with fully-substitutable demands. An agent's demand is \textit{fully-substitutable} if reducing the prices of some buying trades (weakly) decreases her demand for the remaining buying trades and (weakly) increases her demand of selling trades; and, conversely, raising the prices of some selling trades (weakly) decreases her demand for the remaining selling trades and (weakly) increases her demand of buying trades. For agents with only buying trades, or only selling trades, this definition coincides with the standard definition of substitutes introduced by \citet{kelso1982job} in the context of labor markets, also prevalent in the auction literature \citep{Aus06, baldwin2023solving}.

\begin{definition}
\label{definition:full-substitutability}
The demand of agent $i$ is \textit{fully-substitutable} if the uniquely demanded bundles $\Psi, \Psi'$ at any prices $\pb, \pb' \in \Z^{\Omega_i}$ satisfy:
\begin{enumerate}[(i)]
\item $\Psi_{i \rightarrow} \subseteq \Psi'_{i \rightarrow}$ and $\{\omega \in \Psi'_{i \leftarrow} \mid p_\omega = p'_\omega \} \subseteq \Psi_{i \leftarrow}$ when $p_\omega = p'_\omega$ for $\omega \in \Omega_{i \rightarrow}$ and $p_\omega \geq p'_\omega$ for $\omega \in \Omega_{i \leftarrow}$;
\item $\Psi_{i \leftarrow} \subseteq \Psi'_{i \leftarrow}$ and $\{\omega \in \Psi'_{i \rightarrow} \mid p_\omega = p'_\omega \} \subseteq \Psi_{i \rightarrow}$ when $p_\omega = p'_\omega$ for $\omega \in \Omega_{i \leftarrow}$ and $p_\omega \leq p'_\omega$ for $\omega \in \Omega_{i \rightarrow}$.
\end{enumerate}
\end{definition}

We give examples of fully-substitutable buyers, sellers, and traders in the context of \cref{example:coffee-network}. \citet{hatfield2015full,https://doi.org/10.3982/TE3240} provide further equivalent definitions of FS. We call a market with only fully-substitutable agents an \textit{FSM}. 

\begin{example}
\label{example:preferences}
The roaster from \cref{fig:coffee} interested in sourcing raw beans from any one of the two producers, but not both, can express her valuation $v$ as $v(\emptyset) = 0$, $v(\{\omega_{1}\})  = v(\{\omega_{2}\}) > 0$ and $v(\{\omega_{1}, \omega_{2}\}) = -M$, where $\omega_{1}$ and $\omega_{2}$ denote the buying trades of the roaster and $M$ is a sufficiently large negative number. Moreover, the roaster may have the technological constraint that she can only participate in a selling trade (e.g., ground coffee) if she also participates in the buying trade (e.g., coffee beans). She can implement this by assigning value $-M$ to any technologically infeasible bundle of trades, e.g., $v(\{\rho\}) =-M$, where $\rho$ denotes any trade where the roaster acts as a seller.

From the perspective of the coffee shop, a reduction in the price of any blend would reduce its demand for the remaining blends (cf.~\cref{definition:full-substitutability}).
\end{example}

\subsection{Market Dynamic}
\label{section:market-dynamic}
In our market dynamic, agents negotiate with their neighbors by modifying their own offers.
Initially, the offers for trades are arbitrary, and all agents are marked as \textit{unsatisfied}. The dynamic then selects one agent $i \in I$ at a time, uniformly at random, to update her offers by a discrete step size, $\varepsilon \in \mathbb{N}$.\footnote{Without loss of generality, we will assume $\varepsilon = 1$ throughout.}

The agent $i$ first observes her counterparts' offers for its trades $\Omega_i$, which she interprets as the current prices $\pb \in \Z^{\Omega_i}$ of these trades. She then determines the bundle $\Psi \in D^i(\pb)$ she demands at these prices, so that $\Psi$ maximizes her utility $u^i(\cdot, \pb)$. She then \textit{best responds} by updating her offers as follows. As the agent agrees to the prices of all trades $\Psi$, she sets her offers $\offer^i_\omega$ for $\omega \in \Psi$ to match her counterparts' offers $\offer^{-i}_{\omega}$. For all trades $\omega \in \Omega_i \setminus \Psi$, which she does not demand, the agent sets her offer $\omega^i_{\omega}$ to be $\varepsilon$ lower than $\offer^{-i}_{\omega}$ if $\omega$ is a buying trade, and $\varepsilon$ higher than $\offer^{-i}_{\omega}$ if $\omega$ is a selling trade. Setting her offers close to the trading partners' offers leaves room for negotiations while indicating her demand.
Once an agent has best responded, she is marked as \textit{satisfied} and all agents faced with a modified offer from agent $i$ are marked as \textit{unsatisfied}.

\begin{figure}[!tb]
 \centering
 \includegraphics[width=.9\columnwidth]{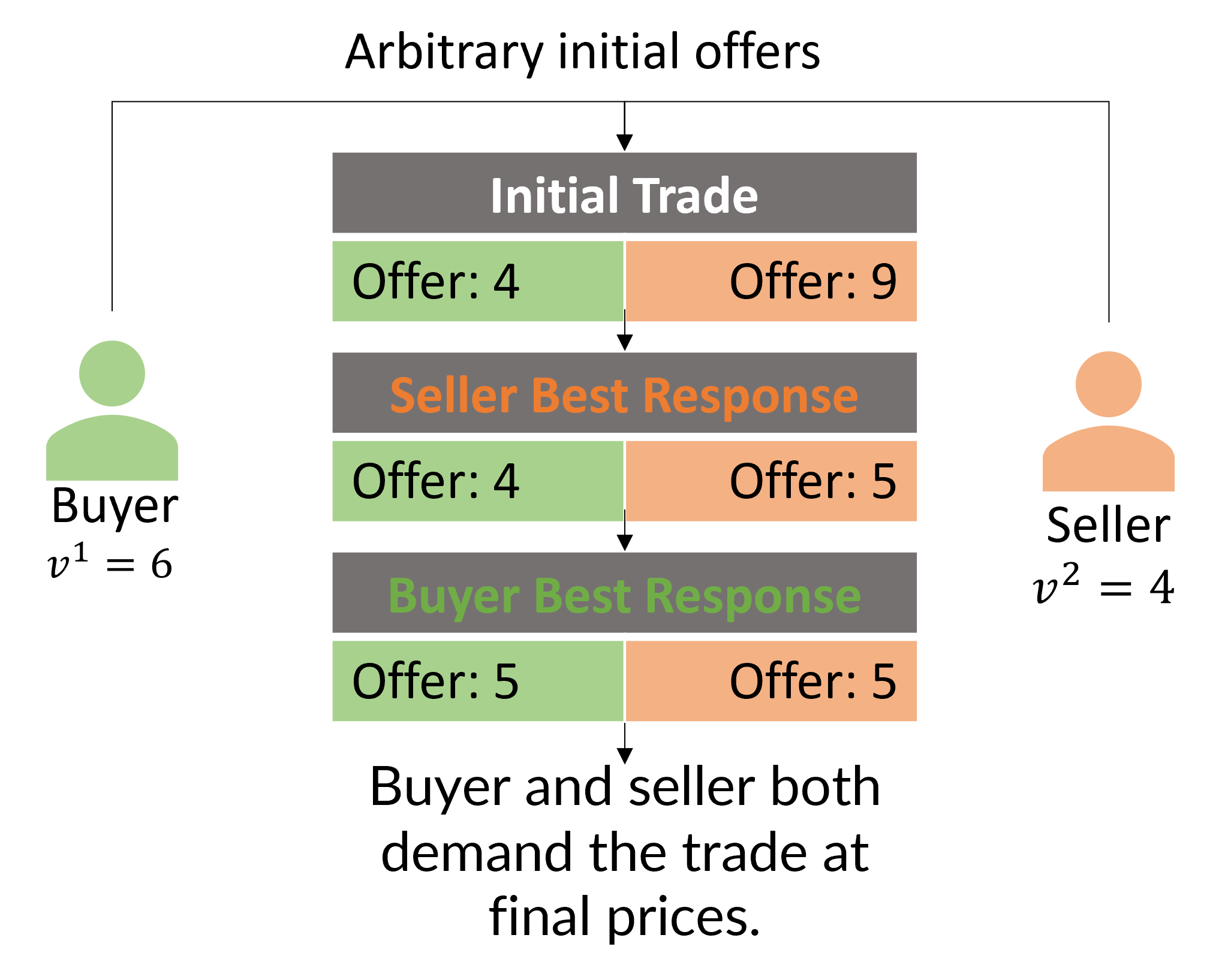}
\caption{An example trade between two parties. Rather than trades having a single price, each trade has two offers, the buyer's offer and the seller's offer. Offers are initially arbitrary, and iteratively updated through best responses.}
\label{figTrade}
\end{figure}

\begin{definition}
\label{definition:best-response}
Suppose an agent $i$ demands $\Phi \in D^i(\pb)$ at the prices $p_\omega = \offer^{-i}_{\omega}$ corresponding to her counterpart's offer for each $\omega \in \Omega_i$. Her \textit{best response} (BR) sets her new offers to
\begin{equation}
\offer^i_{\omega} \coloneqq
\begin{cases}
    \offer^{-i}_{\omega} & \text{if } \omega \in \Phi, \\
    \offer^{-i}_{\omega} - \chi^i_{\omega} \varepsilon & \text{else.}
\end{cases}
\end{equation}
\end{definition}

The \textit{state} $(U,\offers)$ of the market dynamic after each BR is given by the set $U$ of unsatisfied agents and the current offers $\offers$. The dynamic terminates once all agents are satisfied, in which case we say that the market has reached \textit{equilibrium}. \Cref{alg:BR-dynamic} describes the market dynamic in detail. At equilibrium, both agents $i,j$ of each trade $\omega$ demand $\omega$ if $\offer^i_{\omega} = \offer^j_{\omega}$, and both don't demand $\omega$ if $\offer^i_{\omega} \neq \offer^j_{\omega}$.%

As soon as every agent in the market has been selected at least once, it is straightforward that the two offers of any trade differ by $0$ or $\varepsilon$, and in particular satisfy $\offer^{b(\omega)}_\omega \leq \offer^{s(\omega)}_\omega \leq \offer^{b(\omega)}_\omega + \varepsilon$. We restrict our attention to this ``main'' phase of the market dynamic in our theoretical analysis.

\begin{algorithm}[!tb]
\caption{Market dynamic.}
\label{alg:BR-dynamic}
\begin{algorithmic}[1]
\State Initialize buying and selling offers $\offers$ for all trades.
\State Let $U=I$ be the set of unsatisfied agents.
\While {$U \neq \emptyset$}
    \State Sample an agent $i \sim U$ uniformly at random.
    \State Let $\pb \in \Z^{\Omega_i}$ with $p_{\omega} = \offer^{-i}_{\omega}$ for all $\omega \in \Omega_i$ be the offers facing agent $i$.
    \State \label{algstate:smallest-bundle} Determine the unique bundle $\Psi \in D^i(\pb)$ demanded by agent $i$ at prices~$\pb$  [cf.~\eqref{eq:demand}].
    \State Update agent $i$'s offers to $\offer^i_{\omega} = p_\omega$ for $\omega \in \Psi$ and $\offer^i_{\omega} = p_\omega - \chi^i_{\omega} \varepsilon$ for $\omega \in \Omega_i \setminus \Psi$.
    \State Remove agent $i$ from $U$.
    \State Add all agents $j \neq i$ facing modified offers to $U$.
\EndWhile
\State \Return final offers $\offers$.
\end{algorithmic}
\end{algorithm}

We note that fully-substitutable utility functions $u^i(\cdot, \pb)$ are $M^\natural$-concave for any fixed prices $\pb$ (cf.~\citet[Section 4.5]{hatfield2015full}). So agents can compute their demand at any prices $\pb$ in polynomial time \citep[Chapter 10]{murota2003discrete} to determine their best response.

\section{The Emergence of Equilibria}
\label{section:theory}
In extensive numerical experiments, we observe that FSMs always converge to equilibrium. We now present theoretical results towards a proof that FSMs converge almost surely.%
\footnote{Recall that the randomness in the market arises from the choice of unsatisfied agent in each round of the market dynamic.}
This qualitative result is complemented by our experiments in \cref{section:experiments}, which focus primarily on showing the convergence speed and additional welfare analysis. All formal statements and proofs can be found in the appendix. Let $V$ be an upper bound on the absolute value of the agents' valuation functions and initial offers.

Any best response sequence for a market $(I, \Omega, v)$ in a given state is fully determined by the sequence  $i_1, i_2, \ldots$ of agents $i_k \in I$ who are best-responding. In markets with two agents, the best response sequence for any given market and state is unique up to the starting agent, as the agents best respond in alternation. If the market consists of a single trade, then we see that it converges after at most $O(V)$ best responses. For two-agent markets with two trades, we show that full substitutability for both agents guarantees convergence to equilibrium after at most $O(V^2)$ best responses. We note that the definition of FS (\cref{definition:full-substitutability}) only applies in the presence of two or more trades.

\begin{proposition}
\label{proposition:two-agents-convergence}
Markets consisting of a single trade converge to equilibrium after $O(V)$ BRs.
Two-agent FSMs with two trades converge to equilibrium after $O(V^2)$ BRs.
\end{proposition}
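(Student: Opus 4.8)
The plan is to treat the two parts separately, beginning with the single-trade case, which is essentially one-dimensional. Writing $b$ and $s$ for the buyer and seller of the lone trade $\omega$, I would encode their preferences by the two reservation values $\bar p \coloneqq v^b(\{\omega\})$ and $\underline p \coloneqq -v^s(\{\omega\})$: the buyer demands $\omega$ exactly when the price she faces is at most $\bar p$, and the seller demands $\omega$ exactly when the price she faces is at least $\underline p$. Once the dynamic enters its main phase, the two offers for $\omega$ differ by $0$ or $\varepsilon = 1$, so the whole state is captured by the seller's offer $\offer^{s}_\omega$. I would then show that in every non-terminal round this offer moves by exactly $1$ in a fixed direction --- downward toward the equilibrium band when the offers lie above it, and upward when they lie below --- so the offer is monotone along the run. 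Since all valuations and initial offers are bounded by $V$, the offer never leaves an interval of length $O(V)$ around the reservation values, and monotone unit steps then give termination in $O(V)$ best responses. The only care needed is a consistent treatment of the tie-break when the faced price equals $\bar p$ (resp. $\underline p$), which the uniqueness-of-demand assumption supplies.

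For the two-trade case I would first reduce the quantitative claim to a purely combinatorial one. Because the market has two agents, the best-response sequence is deterministic (the agents alternate), so the dynamic is a deterministic trajectory on the space of states $\offers$. I would establish two invariants for the main phase: (a) for each trade the buyer's and seller's offers differ by $0$ or $1$ and satisfy $\offer^{b(\omega)}_\omega \le \offer^{s(\omega)}_\omega$, which is already noted in the text; and (b) every offer stays within an interval of length $O(V)$, which I would prove by arguing that no buyer ever offers to buy above her relevant marginal value and no seller ever offers to sell below her relevant marginal cost (both of which are $O(V)$), while the $\pm 1$ updates keep each offer within $1$ of the counterpart's offer. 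Given (a) and (b), each trade admits only $O(V)$ offer-configurations, so the main phase has at most $O(V)\cdot O(V) = O(V^2)$ reachable states; as the trajectory is deterministic on a finite state space, it either terminates or cycles, and if it never repeats a state it must terminate after $O(V^2)$ best responses. It therefore suffices to prove that no state recurs.

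Proving acyclicity is the heart of the argument and the step I expect to be the main obstacle; it is exactly here that full substitutability is indispensable (the paper exhibits a non-FS two-agent market that cycles). The difficulty is that individual offers need not be monotone over the run --- with two trades, lowering the price of one trade can, through the substitutability coupling, change the demanded status of the other and push its offer back up --- so a naive per-coordinate monotonicity argument fails. My plan is instead to exhibit a potential that strictly decreases with each non-terminal best response. I would build it from the agents' indirect utilities, mirroring the Lyapunov function used for ascending-price processes over $M^\natural$-concave valuations (recall that $u^i(\cdot,\pb)$ is $M^\natural$-concave, as noted after \cref{alg:BR-dynamic}): full substitutability makes the demanded bundle respond monotonically to price changes in the precise sense of \cref{definition:full-substitutability}, and I would use this to show that the price vector faced by the moving agent advances monotonically along the potential, never returning to a previously visited configuration. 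I would carry this out by a short case analysis on the two orientations of the trade pair (both trades sharing a direction between the agents, versus one trade in each direction), verifying in each case that the substitutes inequalities of \cref{definition:full-substitutability} force the relevant demanded-bundle transitions to be one-directional. Combined with the finite state count from the reduction, strict decrease of the potential yields acyclicity and hence the $O(V^2)$ bound.
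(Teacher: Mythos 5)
Your argument for the single-trade case is sound and is essentially the paper's own argument read in the contrapositive: the paper supposes a cycle and observes that one agent must always accept while the other always rejects, so both offers drift monotonically by $\varepsilon$ per round until boundedness (their Lemma on offers being confined to $[-2V-1,2V+1]$, proved by exactly the marginal-value comparison you sketch) yields a contradiction; you instead phrase this as monotone unit progress on a bounded interval. One small point to tighten: the direction of drift is determined by which agent accepts and which rejects, not by the position of the offers relative to an ``equilibrium band,'' and you should include the one-line argument that the direction cannot flip mid-run (if the accepting agent switches to rejecting, her offer is unchanged, her counterpart faces the same price as before and repeats his response, and the dynamic terminates).

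For the two-trade case there is a genuine gap at precisely the step you flag as the heart of the matter. Your reduction to acyclicity on a finite deterministic state space is correct and matches the paper, but the acyclicity argument itself is only a plan: you hypothesize a strictly decreasing potential ``built from the agents' indirect utilities, mirroring the Lyapunov function used for ascending-price processes,'' without constructing it or verifying strict decrease. The analogy is problematic because the standard Lyapunov argument for $M^\natural$-concave valuations relies on monotone price trajectories, which you yourself note fail here; asserting that ``the price vector faced by the moving agent advances monotonically along the potential'' is a restatement of the desired conclusion, not a proof. Any such potential must also use full substitutability in an essential way (the paper's complements example cycles), and nothing in your sketch shows where FS would enter the decrease inequality. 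The paper takes a different and entirely combinatorial route: it encodes the alternating best responses as two rows of accept/reject symbols, shows that any cycle must contain an even-length maximal block in each row, uses FS and quasilinearity to forbid a short list of three-column fragments, and then derives a contradiction by comparing the demanded bundles at two moments where one offer has provably returned to its earlier value. Until you either exhibit a concrete potential with a proof of strict decrease, or carry out a case analysis of comparable completeness, the second half of the proposition is not established.
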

We prove the single-trade case by contradiction: suppose that the dynamic does not terminate. Then we observe that one agent always accepts her counterpart's offer for the trade $\omega$, while the other always rejects. So the two agent's offers both increase, or both decrease, by $\epsilon$ after each round of best responses from both agents. But the buyer (seller) demands $\emptyset$ instead of the trade if the price is high (low) enough, a contradiction.
In the two-trade case, we generalize this approach and use full substitutability to reach the required contradiction. We refer to the appendix for the full proof of \cref{proposition:two-agents-convergence}.

By contrast, \cref{example:complements-cycle} presents a market with two trades between a substitutes buyer and a complements seller for which the market dynamic fails to terminate. This demonstrates that FS is a necessary condition in the maximal domain sense, as the market dynamic can cycle if one or more agents are not fully-substitutable. We conjecture that two-agent FSMs converge for any number $m$ of trades after at most $f(m,V) = O(V^m)$ best responses, but that this will require new proof techniques.

\begin{example}
\label{example:complements-cycle}
Consider a two-agent market with a buyer and a seller and valuations given by:
\begin{center}
\begin{tabular}{ccccc}
            & $\emptyset$ & $\{\omega\}$ & $\{\varphi \}$ & $\{\omega, \varphi \}$ \\
            \midrule
    buyer   & $0$ & $8$ & $9$ & $-\infty$ \\
    seller   & $0$ & $-6$ & $-7$ & $-9$
\end{tabular}
\end{center}
The buyer has substitutes preferences, and the seller has complementary preferences that violate \cref{definition:full-substitutability}. If the buyer makes initial offers $\offer^b_\omega=4$ and $\offer^b_\omega = 5$ for the two trades, and the market dynamic starts with a best response from the seller, then the dynamic cycles after two best responses from each agent. Assuming agents apply the lexicographic tie-breaking rule $\emptyset \prec \{\omega \} \prec \{\varphi\} \prec \{\omega, \varphi\}$ to decide their demanded bundle between two or more utility-maximizing bundles, the offers made by the alternating buyer and seller are:
\begin{center}
\begin{tabular}{cccccc}
            &  $\offers^b$ & $\offers^s$ & $\offers^b$ & $\offers^s$ & $\offers^b$ \\
\midrule
$\omega$    &  $4$ & $5$ & $5$ & $5$ & $4$ \\
$\varphi$   &  $5$ & $6$ & $5$ & $5$ & $5$
\end{tabular}
\end{center}
\end{example}

\begin{conjecture}
\label{conjecture:two-agent-convergence}
Two-agent FSMs with any number of trades converge to equilibrium.
\end{conjecture}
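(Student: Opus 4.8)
The plan is to argue by contradiction, exploiting the fact that in a two-agent market the best response sequence is deterministic up to the starting agent, since the two agents alternate. First I would establish that the reachable state space is finite: because each agent's valuations and the initial offers are bounded by $V$, no agent will ever offer to buy a trade at a price exceeding her marginal value for it, nor offer to sell below her marginal cost, as doing so would make $\emptyset$ strictly preferable. Hence all offers remain within an interval of width $O(V)$, the dynamic ranges over finitely many states during the main phase, and non-termination is equivalent to the occurrence of a nontrivial periodic cycle of states, over which every offer $\offer^i_\omega$ returns to its initial value after one period.

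The core of the argument would then be a discrete revealed-preference (cyclic monotonicity) inequality applied to each agent. Because a fully-substitutable demand is rationalized by an $M^\natural$-concave valuation (cf.\ the remark following \cref{alg:BR-dynamic}), the sequence of (faced price, demanded bundle) pairs generated by an agent along the cycle must satisfy $\sum_t \chi(\Psi^{(t)}) \cdot (\pb^{(t+1)} - \pb^{(t)}) \ge 0$, where $\chi(\Psi)$ is the signed incidence vector $(\chi^i_\omega)_{\omega \in \Psi}$ of the demanded bundle and $\pb^{(t)}$ are the prices (that is, the counterpart's offers) faced at step $t$. Summing this inequality for both agents around one full period, and using that the price agent $A$ faces is precisely the offer agent $B$ posts and vice versa, I would aim to collapse the combined expression to a nonnegative quantity measuring the total offer movement on the \emph{active} trades, namely those whose offers actually change during the cycle. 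Matching the resulting upper and lower bounds would force this movement to vanish, so that no offer changes over the period, contradicting the assumption that the cycle is nontrivial.

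The main obstacle is the $\varepsilon$-offset built into the best response: for a trade she does not demand, an agent posts an offer that is $\varepsilon$ worse than her counterpart's, so the prices she faces are not exactly the prices that reveal her counterpart's demand, and the agreement structure (gap $0$ versus gap $\varepsilon$, equivalently $\offer^{b(\omega)}_\omega = \offer^{s(\omega)}_\omega$ versus $\offer^{b(\omega)}_\omega < \offer^{s(\omega)}_\omega$) must be tracked trade by trade. Reconciling the two agents' revealed-preference inequalities on a shared trade therefore requires careful accounting of which trades are agreed and which are disagreed at each step, and of how the $\pm\varepsilon$ shifts enter the telescoped sums. I expect this bookkeeping, rather than the monotonicity principle itself, to be the crux, and it is presumably the ``new proof technique'' anticipated in the discussion preceding the conjecture.

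As a complementary route, should the global summation prove unwieldy, I would attempt an extremal and inductive argument: restrict attention to the active trades in a cycle, single out one whose offer attains its maximum over the period, and use full substitutability to show that the configuration at that extremal moment reduces to the single-trade situation analyzed in \cref{proposition:two-agents-convergence}, forcing a monotone drift that periodicity forbids. Freezing trades that stabilize would then lower the number of active trades and enable an induction on the trade count. Either route is consistent with the conjectured $O(V^m)$ running time, since the length of any cycle is bounded by the $O(V^m)$ size of the reachable offer configurations.
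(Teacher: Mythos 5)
The statement you are asked to prove is, in the paper, an \emph{open conjecture}: the authors explicitly state that two-agent FSMs with $m\ge 3$ trades are only conjectured to converge and that this ``will require new proof techniques.'' The paper proves only the cases $m=1$ and $m=2$ (\cref{proposition:two-agents-convergence}), via a combinatorial analysis of the accept/reject string representation of the alternating best-response sequence, ruling out specific three-column fragments using full substitutability and quasilinearity. Your opening step (offers stay in a window of width $O(V)$, so the deterministic alternating dynamic either terminates or enters a cycle) is correct and matches \cref{lemma:offers-are-bounded}. Beyond that, your proposal is a plan rather than a proof, and its central step has a concrete gap.

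The gap is in the claimed ``core of the argument.'' The cyclic-monotonicity inequality you write, $\sum_t \chi(\Psi^{(t)})\cdot(\pb^{(t+1)}-\pb^{(t)})\ge 0$, is a consequence of quasilinearity alone: it follows from $u^i(\Psi^{(t)},\pb^{(t)})\ge u^i(\Psi^{(t+1)},\pb^{(t)})$ by telescoping, with no use of full substitutability. But \cref{example:complements-cycle} exhibits a two-agent, two-trade market with quasilinear (non-FS) agents in which the dynamic genuinely cycles; both agents in that example satisfy your inequality around the cycle. Hence no argument that uses only this inequality can force the cycle to be trivial --- FS must enter in an essential, quantitative way that your sketch does not supply. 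Concretely, around one period every offer returns to its starting value, so the net signed price movement on each trade is zero and the telescoped sum is consistent with $\ge 0$ without contradiction; what you would need to bound is the total variation of the offers, and the revealed-preference inequalities do not control that. The $\varepsilon$-offset bookkeeping you flag as ``the crux'' is not a technicality to be reconciled afterwards: it is exactly the mechanism by which the counterexample's offers rise and fall back, and any successful proof must combine it with FS (as the paper's fragment-exclusion argument does for $m=2$). Your alternative extremal/inductive route is likewise unsubstantiated --- ``freezing'' a trade whose offer attains its maximum does not obviously yield a smaller FSM whose induced dynamic coincides with the original one on the remaining trades. In short, neither route as described closes the conjecture, and the first route as literally stated would prove a false statement.
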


We next turn to markets with arbitrarily many agents. Our main theoretical result is to reduce the problem of convergence for markets with any number of agents to the case of two-agent markets. For this, we categorize markets by the `sparsity' of their topologies.

\begin{definition}
\label{definition:sparse}
A market is \textit{$m$-sparse} if every subgraph of its underlying graph admits a cut of size at most~$m$. (A \textit{cut of size $m$} is a partition of a graph's vertices into two vertex sets with $m$ crossing edges between the two sets.)
\end{definition}
The topology of a $1$-sparse market is a forest. Moreover, markets with agents who are each involved in at most $m$ trades are trivially $m$-sparse.

In \cref{proposition:convergence-guarantees}, we show that the convergence of two-agent markets with one trade implies that any $1$-sparse market admits a best response sequence after which the market terminates. Surprisingly, this result holds even if the agents are not fully-substitutable. For $m$-sparse markets with $m \geq 2$, \cref{example:complements-cycle} motivates us to focus on the setting with full substitutability. We establish that every FSM admits a best response sequence after which the dynamic terminates if and only if two-agent FSMs converge.

\begin{figure*}[htb!]
\centering
\begin{subfigure}[b]{0.33\textwidth}
\centering
\begin{tikzpicture}[xscale=0.3, yscale=0.45]
\draw[blue, dashed] (0,0) rectangle (8,6);
\draw[red, dashed] (11,0) rectangle (19,6);
\node[blue] at (4,7) {$J^1$};
\node[red] at (15,7) {$J^2$};
\node[agent, blue] (x1) at (3,5) {};
\node[agent, blue] (x2) at (2,2) {};
\node[agent, blue] (a) at (6,5) {};
\node[agent, blue] (b) at (5,3) {};
\node[agent, blue] (c) at (7,1) {};
\node[agent, red] (d) at (13,5) {};
\node[agent, red] (e) at (14,3) {};
\node[agent, red] (f) at (13,2) {};
\node[agent, red] (y1) at (16,1) {};
\node[agent, red] (y2) at (17,4) {};
\draw[-Latex] (a) -- (d) node[midway,fill=white] {$\omega$};
\draw[Latex-] (b) -- (e) node[midway,fill=white] {$\varphi$};
\draw[-Latex] (c) -- (f) node[midway,fill=white] {$\psi$};
\draw[-Latex] (x1) -- (x2);
\draw[-Latex] (x1) -- (a);
\draw[-Latex] (x2) -- (b);
\draw[-Latex] (c) -- (b);

\draw[-Latex] (d) -- (y2);
\draw[-Latex] (y2) -- (e);
\draw[-Latex] (y1) -- (y2);
\path (f) edge[-Latex, bend left] (y1);
\path (y1) edge[-Latex, bend left] (f);
\end{tikzpicture}
\caption{A $3$-sparse market}
\end{subfigure}
\begin{subfigure}[b]{0.3\textwidth}
\centering
\begin{tikzpicture}[xscale=0.3, yscale=0.45]
\draw[blue, dashed] (0,0) rectangle (8,6);
\node[blue] at (4,7) {$J^1$};
\node[agent,blue] (x1) at (3,5) {};
\node[agent,blue] (x2) at (2,2) {};
\node[agent,blue] (a) at (6,5) {};
\node[agent,blue] (b) at (5,3) {};
\node[agent,blue] (c) at (7,1) {};
\draw[-Latex, dashed, draw=gray] (a) -- (d) node[midway,fill=white] {$\omega$};
\draw[Latex-, dashed, draw=gray] (b) -- (e) node[midway,fill=white] {$\varphi$};
\draw[-Latex, dashed, draw=gray] (c) -- (f) node[midway,fill=white] {$\psi$};
\draw[-Latex] (x1) -- (x2);
\draw[-Latex] (x1) -- (a);
\draw[-Latex] (x2) -- (b);
\draw[-Latex] (c) -- (b);
\end{tikzpicture}
\caption{The market restricted to $J^1$}
\end{subfigure}
\begin{subfigure}[b]{0.3\textwidth}
\centering
\begin{tikzpicture}[xscale=0.3, yscale=0.45]
\draw[draw=none, fill=none] (-2,0) rectangle (12,7);
\node[agent, blue, label=below left:$1$] (one) at (0,3) {};
\node[agent, red, label=below right:$2$] (two) at (10,3) {};
\path[-Latex] (one) edge [bend left] node [midway,fill=white] {$\omega$} (two);
\path[Latex-] (one) edge [] node [midway,fill=white] {$\varphi$} (two);
\path[-Latex] (one) edge [bend right] node [midway,fill=white] {$\psi$} (two);
\end{tikzpicture}
\caption{The market after merging $J^1$ and $J^2$.}
\end{subfigure}
\caption{Illustrating the proof of \cref{proposition:convergence-guarantees}.}
\label{fig:main-proof-illustration}
\end{figure*}

\begin{proposition}
\label{proposition:convergence-guarantees}
\ 
\begin{enumerate}[(i)]
\item Every market with a forest topology (even when agents are not FS), in any state, admits a terminating best response sequence iff two-agent markets with one trade converge.
\item Every $m$-sparse FSM, in any state, admits a terminating best response sequence iff two-agent FSMs with $m$ trades converge.
\end{enumerate}
\end{proposition}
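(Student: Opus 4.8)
The plan is to prove both parts by a single inductive reduction on the number of agents, treating the forest case (i) as the specialization of (ii) to $m=1$. The two ``only if'' directions are immediate: a two-agent market with at most $m$ trades is itself $m$-sparse, so if every $m$-sparse (FS)M in any state admits a terminating BR sequence, then so does this two-agent market; and since two-agent BR sequences are unique up to the starting agent, ``admits a terminating sequence'' coincides there with convergence. I would therefore concentrate on the ``if'' directions, i.e.\ on showing that convergence of the two-agent base case lifts to all $m$-sparse markets.

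The core construction is the super-agent/merging idea sketched in \cref{fig:main-proof-illustration}. Given an $m$-sparse $\market$ with $|I|\ge 3$, I use $m$-sparsity to pick a cut of size $k\le m$ splitting $I$ into nonempty parts $J^1, J^2$ joined by a set $T$ of at most $m$ crossing trades. Each induced submarket on $J^1$ and $J^2$ is again $m$-sparse and has strictly fewer agents, so the inductive hypothesis applies to it. I then collapse each side into a single super-agent $A^\ell$ whose trades are exactly the crossing trades $T$ and whose valuation is the aggregate valuation $V^{J^\ell}(\Psi) = \max_{\Phi} \sum_{i \in J^\ell} v^i((\Psi\cup\Phi)_i)$, where $\Phi$ ranges over bundles of internal trades of $J^\ell$. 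This yields a two-agent market on $A^1, A^2$ with $|T|\le m$ trades; invoking the two-agent convergence hypothesis on it gives a terminating ``outer'' BR sequence, which I then expand back into a terminating BR sequence for $\market$.

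Making this rigorous requires two ingredients, and these are where the real work lies. First, I must show that $V^{J^\ell}$ is fully substitutable whenever all agents in $J^\ell$ are, so that $(A^1,A^2)$ is a genuine two-agent FSM to which the hypothesis applies. This is the closure of full substitutability (equivalently, $M^\natural$-concavity of the induced utilities, cf.\ \citet{hatfield2015full} and \citet{murota2003discrete}) under agent aggregation together with internalization of the shared trades; for part (i) a single crossing trade makes this step vacuous, which is exactly why (i) needs no FS assumption and relies only on the unconditional single-trade convergence of \cref{proposition:two-agents-convergence}. Second, I must establish a simulation lemma: a single BR of super-agent $A^\ell$ can be realized inside $\market$ by selecting only the agents of $J^\ell$---freezing the opposite side's offers as fixed prices on the crossing trades---and running their internal dynamic to an internal equilibrium. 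By folding the frozen crossing prices into the boundary agents' valuations, this internal dynamic is exactly the dynamic of a smaller $m$-sparse FSM, so it terminates by induction; and the crossing offers at which it terminates must coincide with $A^\ell$'s best response, because an all-satisfied internal state is a competitive equilibrium of the internal market and hence its demanded crossing trades realize the aggregate demand $D^{A^\ell}$. The consistent tie-breaking rule of \cref{section:preferences} is needed for this identification and to make the super-agent's demand well defined.

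With both lemmas in hand the argument closes: I take a terminating outer sequence for $(A^1,A^2)$, replace each outer BR by the finite internal sub-sequence provided by the simulation lemma, and concatenate. The result is a finite BR sequence for $\market$ at whose end every outer agent is satisfied---meaning all crossing trades are in equilibrium and, simultaneously, every internal agent of both sides is satisfied---so $\market$ has reached equilibrium. I expect the main obstacle to be the first ingredient, the aggregation-closure of full substitutability, together with the demand-consistency claim linking a BR-terminal internal state to the aggregate demand $D^{A^\ell}$; the bookkeeping of expanding outer BRs into internal sub-sequences and checking that the offers match is routine by comparison.
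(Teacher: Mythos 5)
Your proposal follows essentially the same route as the paper's proof: the same cut of size at most $m$ from $m$-sparsity, the same induction on the number of agents via the restricted market with external trades endogenized at frozen counterpart offers (preserving full substitutability by \citet{hatfield2015full}), and the same merging of $J^1,J^2$ into two super-agents whose unique alternating BR sequence is simulated, one outer step at a time, by terminating internal subsequences. The only difference is presentational --- the paper builds the interleaved sequence first and then bounds its length via the merged market, whereas you expand the merged market's sequence --- so the argument is correct and matches the paper.
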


We prove \cref{proposition:convergence-guarantees} in two steps. Fix a market $\market = (I, \Omega, v)$ in state $(U^0, \offers^0)$. We partition the set of agents $I$ of the market into two non-empty subsets, $J^1$ and $J^2$, and let $\Omega^{1,2}$ be the trades between these agent sets. This is illustrated in \cref{fig:main-proof-illustration}, in which $\Omega^{1,2} = \{\omega, \varphi, \psi\}$. For notational convenience, define $J^k \coloneqq J^{k \bmod 2}$ for $k \geq 1$.

In the first step, we describe a procedure to construct a best response sequence $S = S^1 S^2 \ldots$ consisting of subsequences $S^k$. Each $S^k$ contains only agents from $J^k$, and \textit{all} agents in $J^{k}$ are satisfied after applying sequence $S^1 \ldots S^k$. Our second step is then to prove that $S$ is finite and the market terminates after applying $S$. We now sketch out each step in more detail; the full proof is given in the appendix.

The subsequences $S^k$ are constructed recursively. Let $(U^{k-1},\offers^{k-1})$ be the state of the market after applying sequence $S^1 \ldots S^{k-1}$. If all agents in the market are satisfied after this sequence, we are done and define $S = S^1 \ldots, S^{k-1}$. Otherwise, to find $S^{k}$, we restrict the market $\market$ in state $(U^{k-1}, \offers^{k-1})$ to agents $J^k$ and modify the valuations of the agents in $J^k$ who share a trade with agents outside $J^k$. The modified valuations endow agents with the opportunity to execute their trades in $\Omega^{1,2}$ at prices set to their counterparts' offers, even though these are no longer included in the restricted market. This valuation transformation was first described in \citet{hatfield2015full}, and preserves full substitutability. Intuitively, this means that the agents $J^k$ best respond identically in the original and restricted markets. By induction, the restricted market admits a terminating best response sequence $S^k$. As all agents in the restricted market are satisfied after $S^k$, all agents in $J^k$ are also satisfied after applying $S^k$ to the original market in state $(U^{k-1}, \offers^{k-1})$. \Cref{fig:main-proof-illustration} (b) shows the restricted market in which the three agents of $J^k$ involved in the grayed-out trades $\Omega^{1,2}$ have valuations that endogenise these trades.

In order to show that $S$ is finite, we consider the market $\widetilde{\market}$ obtained by merging agent sets $J^1$ and $J^2$ into two agents, $1$ and $2$. The trades of $\widetilde{\market}$ consist of $\Omega^{1,2}$ with endpoints changed in the natural way to agents $1$ and $2$, and the initial market state is modified similarly. Moreover, the valuation of merged agent $k$ is defined such that she demands the same trades $\Omega^{1,2}$ as the agents $J^k$ in the original market do, whenever the counterparts' offers for these trades are the same in both markets, preserving FS \citep{hatfield2015full}.
As $\widetilde{\market}$ has two agents, it admits a unique best response sequence starting with agent $1$. We show that because all agents $J^k$ in the original market are satisfied after sequence $S^1 \ldots, S^{k}$, the offers $\offers^{k}$ for trades $\Omega^{1,2}$ after this sequence are the same as the offers after the $k$th best response in the merged market. This implies that the length of the best response sequence for $\widetilde{\market}$ is the number of subsequences of $S$. As two-agent markets converge by assumption, $S$ must thus converge after finitely many best responses,  concluding the proof sketch.

Our market dynamic selects unsatisfied agents uniformly at random, so it will eventually select a terminating best response sequence if one exists. So \cref{proposition:convergence-guarantees,proposition:two-agents-convergence} imply that $1$-sparse markets (even with non-FS agents) and $2$-sparse FSMs converge almost surely. Similarly, under the assumption that \cref{conjecture:two-agent-convergence} holds, every FSM converges almost surely.

\begin{theorem}
\label{theorem:market-convergence}
\ 
\begin{enumerate}[(i)]
\item Every $1$-sparse market converges almost surely to equilibrium, even when agents are not FS.
\item Every $2$-sparse FSM converges almost surely to equilibrium. 
\item Every FSM converges almost surely to equilibrium if \cref{conjecture:two-agent-convergence} holds.
\end{enumerate}
\end{theorem}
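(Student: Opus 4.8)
The plan is to obtain all three statements as corollaries of \cref{proposition:convergence-guarantees,proposition:two-agents-convergence} (together with \cref{conjecture:two-agent-convergence} for part~(iii)) by supplying the one missing ingredient: the passage from \emph{``a terminating best response sequence exists from every state''} to \emph{``the randomized dynamic terminates almost surely.''} The bridge is a standard finite-Markov-chain argument. First I would model the main phase of the dynamic as a Markov chain whose states are the configurations $(U, \offers)$ reachable from the given initial state; the terminal states with $U = \emptyset$ are absorbing, and each non-terminal state transitions to a successor determined by the uniformly random choice of an agent in $U$ and that agent's deterministic best response. The structural input I need is exactly \cref{proposition:convergence-guarantees}: it guarantees that from every state a terminating sequence exists, i.e.\ that an absorbing state is reachable with positive probability from every reachable state, under the hypotheses appropriate to each part.

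To turn reachability of an absorbing state into almost-sure absorption I need the reachable state space to be finite, which reduces to bounding the offers. I would argue that, in the main phase, the two offers on any trade lie within $\varepsilon$ of one another, and that a rational agent never offers to buy a trade at a price exceeding her (bounded) marginal value nor to sell below her marginal cost; since valuations and initial offers are bounded by $V$, every offer remains confined to an interval of width $O(V)$, so only finitely many states are reachable. Given finiteness, reachability of an absorbing state from every state yields a uniform horizon $N$ and a probability $\delta > 0$ with $\Pr[\text{absorbed within } N \text{ steps} \mid \text{start at } x] \ge \delta$ for every non-terminal $x$; partitioning time into blocks of length $N$ then bounds the probability of surviving $kN$ steps by $(1-\delta)^k \to 0$, giving almost-sure termination.

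It then remains to check, for each part, that \cref{proposition:convergence-guarantees} applies and that the relevant two-agent base case converges. For~(i), a forest is precisely a $1$-sparse market, so \cref{proposition:convergence-guarantees}(i) reduces the existence of a terminating sequence to convergence of two-agent markets with a single trade, which \cref{proposition:two-agents-convergence} establishes; this branch needs no substitutability assumption. For~(ii), \cref{proposition:convergence-guarantees}(ii) with $m = 2$ reduces to two-agent FSMs with two trades, again covered by \cref{proposition:two-agents-convergence}. For~(iii), every agent is involved in at most $|\Omega|$ trades, so any finite market is $|\Omega|$-sparse; \cref{proposition:convergence-guarantees}(ii) then reduces to two-agent FSMs with $|\Omega|$ trades, whose convergence is exactly the content of \cref{conjecture:two-agent-convergence}, closing the general case conditionally on the conjecture.

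The only genuinely non-mechanical step is the finiteness of the reachable state space, i.e.\ the uniform $O(V)$ bound on offers (the care here is for trader agents with both buying and selling trades, where the ``never buy above / sell below marginal value'' reasoning must be made precise via full substitutability); everything else is either cited from the preceding results or is the routine block-decomposition estimate for finite absorbing chains. Should the offer bound prove delicate, a fallback is to avoid finiteness altogether: extract uniform bounds on both the length and the success probability of the terminating sequences directly from the quantitative $O(V)$ and $O(V^2)$ bounds of \cref{proposition:two-agents-convergence}, propagated through the merging/restriction reduction, and then run the same block argument state-by-state without invoking finiteness of the whole chain.
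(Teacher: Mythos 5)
Your proposal is correct and follows essentially the same route as the paper: the paper also derives the theorem directly from \cref{proposition:convergence-guarantees} and \cref{proposition:two-agents-convergence} (plus \cref{conjecture:two-agent-convergence} for part (iii)), disposing of the ``almost surely'' step in one sentence by noting that uniform random selection eventually picks a terminating sequence. You merely make that probabilistic step explicit via the standard finite absorbing-chain block argument, with the needed finiteness of the state space supplied by the same $O(V)$ offer bound the paper proves in its boxing lemma (\cref{lemma:offers-are-bounded}), which in fact needs only quasilinearity and bounded valuations rather than full substitutability.
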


\section{Experiments}
\label{section:experiments}

In the previous section, we proved theoretical convergence guarantees for $1$-sparse and $2$-sparse FSMs, and conjectured that all FSMs converge. We now explore the market dynamic computationally, analyzing paths and time to convergence under various market settings.  Our primary objective for running experiments is to support the theoretical claims of the previous section and study the rate of convergence to equilibrium. Additionally, we analyze the resulting welfare based on agent composition, and the impact of exogenous shocks on the resulting equilibria and convergence.

\textbf{Experiment Configuration.} We assume one kind of good, and that sellers and buyers have unit supply/demand for this good, i.e., they want to buy/sell at most one item of this good\footnote{Note that neither of these are requirements of the market setting, but allow for a clear evaluation.}. Each buyer and seller's value $c_i \in C$ for an item of this good is drawn uniformly at random from a predetermined set of integers, $C = \{1, \ldots, 100\}$. For the seller, the value can be understood as the production cost, and for a buyer, their perceived fundamental value of the good. The valuation function of a buyer $i$ is given by $v^i(\emptyset) = 0$, $v^i(\Psi) = c_i$ if $|\Psi \cap \Omega_i| = 1$ and $v^i(\Psi) = -\infty$ otherwise. For sellers, we have $v^i(\emptyset) = 0$, $v^i(\Psi) = -c_i$ if $|\Psi \cap \Omega_i| = 1$ and $v^i(\Psi) = -\infty$ otherwise. Intermediaries cannot sell more than they buy, and have no value for retaining items, so their valuation is $v^i(\Psi) = 0$ if $\Psi_i$ contains the same number of buying and selling trades, and $v^i(\Psi) = -\infty$ otherwise.  Environments are configured in Phantom \citep{ardon2023phantom}.

\paragraph{Market Topologies.}
We analyze several FSM topologies, ranging from bipartite networks of buyers and sellers to general networks with buyers, sellers, and intermediaries.

\begin{description}
    \item[BS:] Bipartite networks consisting of buyers and sellers. Buyers only trade with sellers and vice versa. 
    \item[BIS:] Networks with buyers, sellers, and intermediaries. Buyers and sellers only trade through intermediaries, and intermediaries do not trade with other intermediaries.
    \item[General:] Networks with buyers, sellers, and intermediaries. Buyers and sellers transact through intermediaries or directly. Additionally, intermediaries may trade with other intermediaries, and cycles are permitted.
\end{description}

Examples of these topologies, and their construction processes, are described and visualized in the appendix. The BS topology models direct trade between buyers and sellers, e.g., direct-to-consumer or business-to-consumer markets. The other topologies, which include intermediaries, can model a number of more complex markets, such as used car markets with buyers and sellers interacting through dealers \cite{hatfield2013stability}, energy markets with consumers and generators mediated through suppliers \cite{morstyn2018bilateral}, or even general supply chains \cite{ostrovsky2008stability}.

\subsection{Experimental Results}
The convergence rates across the network structures with different market \textit{sizes} (number of market participants $N$), and \textit{compositions} (the proportion of buyers, sellers, and intermediaries) are analyzed. Each configuration is run 100 times on a standard personal computer\footnote{Run with Python 3 on a 2022 Macbook Air M2 with 8GB RAM.}, with the mean $\pm$ standard deviation presented.

\paragraph{BS Networks.} For each buyer/seller pair $i,j$ in a BS network, a potential trade between $i$ and $j$ is added with some probability $r$ (here, $r=0.1$).

Convergence paths are visualized in \cref{figConvergence}. Increasing the number of agents increases the time to convergence, as there is more room for bargaining/trade among the agents (e.g., a buyer has access to more sellers, so has more negotiation opportunity). Importantly, in each case, we see convergence (the satisfied rate converges to $1$). Convergence rates across $N$ are visualized in \cref{figConvergenceRatesBS}. 

Varying the proportion of buyers in the market (\cref{figBSWelfare}), we recover the expected economic welfare behavior that as the proportion of buyers increases (decreases), their welfare, measured by the average utility, decreases (increases), due to market competition, resulting in reduced consumer surplus (increased producer surplus).

\begin{figure*}[!tb]
\centering
\begin{subfigure}[b]{0.33\textwidth}
\centering
\includegraphics[width=\textwidth]{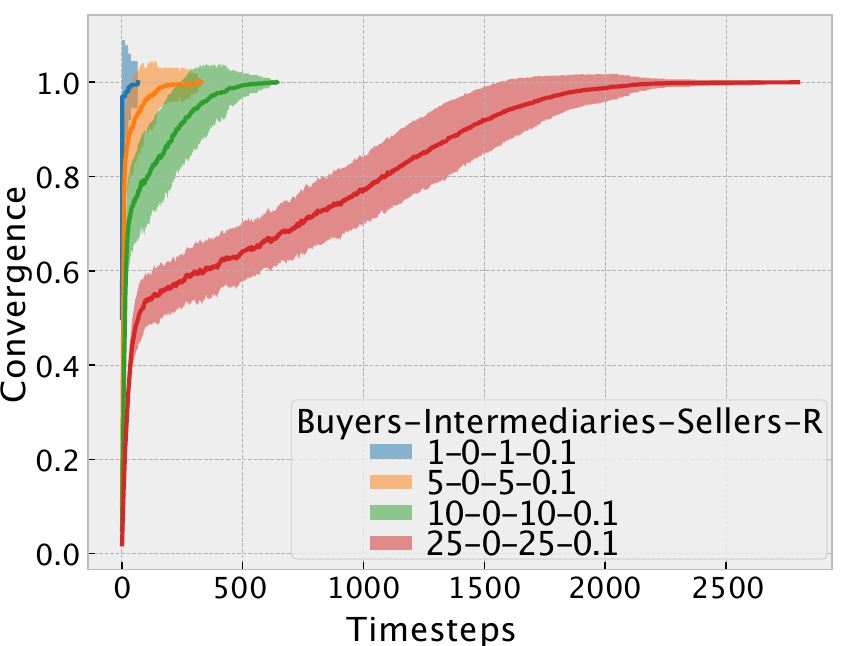}
\caption{BS}
\label{figConvergence}
\end{subfigure}
\begin{subfigure}[b]{0.33\textwidth}
\centering
\includegraphics[width=\textwidth]{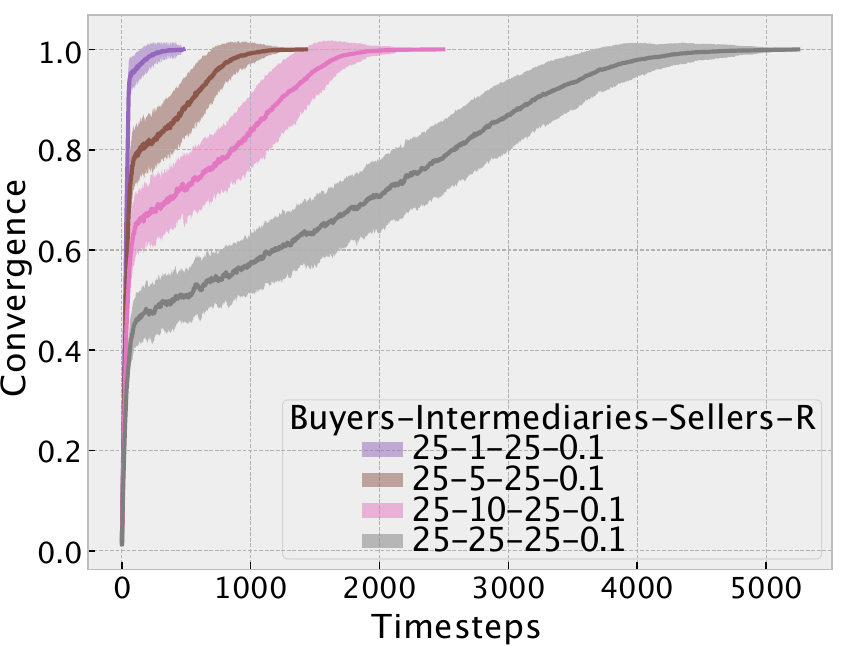}
\caption{BIS}
\label{figintermediaryConvergence}
\end{subfigure}
\begin{subfigure}[b]{0.33\textwidth}
\centering
\includegraphics[width=\textwidth]{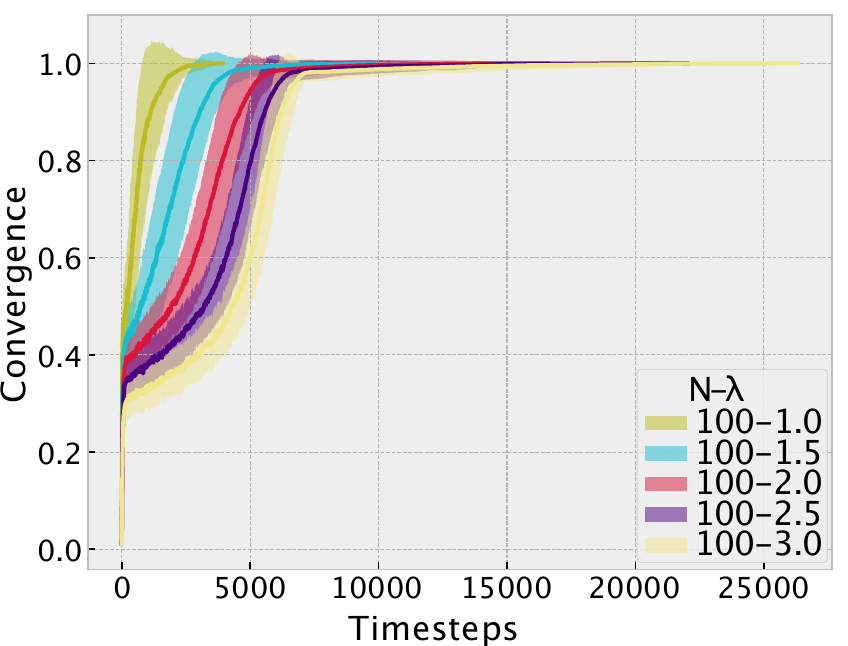}
\caption{General}
\label{figGeneralConvergence}
\end{subfigure}
\caption{Convergence paths by market topology. The $x$-axis shows the number of iterations (e.g., number of best response updates), and the $y$-axis shows the proportion of \textit{satisfied} agents. 
}
\end{figure*}

\begin{figure*}[!tb]
    \centering
    \includegraphics[width=.72\textwidth]{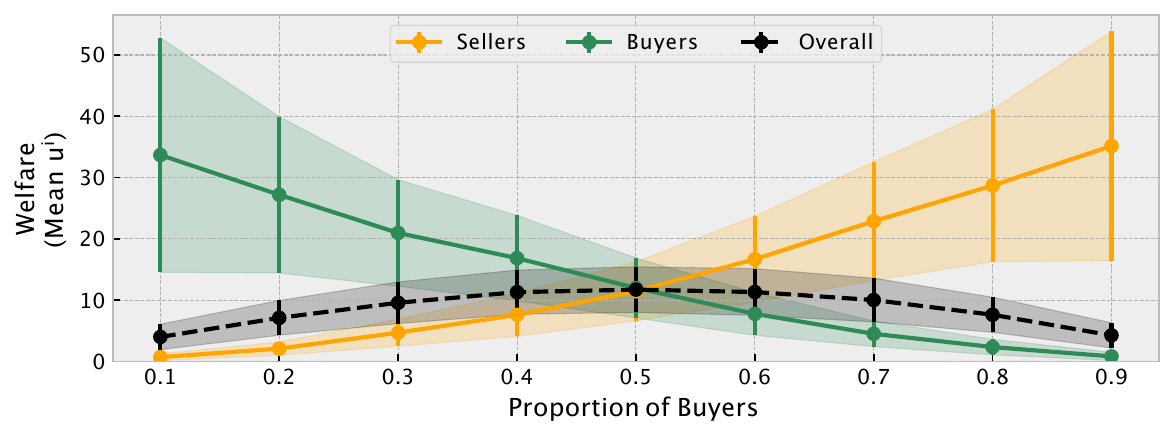}
    \caption{Social welfare (mean $u_i$) by agent composition in the BS network. The $x$-axis varies the proportion of buyers, and the $y$-axis displays the resulting welfare.}
    \label{figBSWelfare}
\end{figure*}

\paragraph{BIS Networks.} Introducing intermediaries into the network alters the convergence rates as transactions must happen through an intermediary. We explore different numbers of intermediaries, again, with a fixed connectivity rate of $r=0.1$, where for the possible pairs $i, j$ at least one of $i$ or $j$ must be an intermediary (e.g., there is no direct connection between buyers and sellers).

Convergence rates across an increasing number of intermediaries are visualized in  \cref{figintermediaryConvergence}. In each case, the market converges to an equilibrium, as predicted by the theory. Increasing the number of intermediaries increases the time to convergence, as there are more opportunities for trade. When the number of intermediaries is low, buyers/sellers have little room for negotiation, so convergence is more rapid. The more intermediaries in the market, the lower their utilities are due to the competition, which results in higher utilities for the buyers and sellers.

\paragraph{General Networks.}
We now consider general trading networks. For these experiments, we generate Erdős-Rényi networks
with connectivity rates $r=\frac{\lambda}{N}$, where $N = 100$ and $\lambda = \{1, 1.5, 2, 2.5, 3\}$.

Convergence rates across $\lambda$ are visualized in \cref{figGeneralConvergence}. Despite featuring cycles, these trading networks still converge, at a rate approximately linear with $\lambda$ (\cref{figConvergenceRatesGeneral}). While we do not yet have theoretical guarantees of convergence under this setting, these results help to show that experimentally these general topologies also converge, opening a promising line of future research building on \cref{theorem:market-convergence}. The welfare of the agents does not change drastically across $\lambda$, but generally, as $\lambda$ increases the proportion of intermediaries in the network increases, so the welfare for buyers/sellers increases, and the welfare for intermediaries decreases (see appendix).

\subsection{Shocks}
\begin{figure*}[!htb]
\centering
\begin{subfigure}[b]{0.41\textwidth}
\centering
\includegraphics[width=\textwidth]{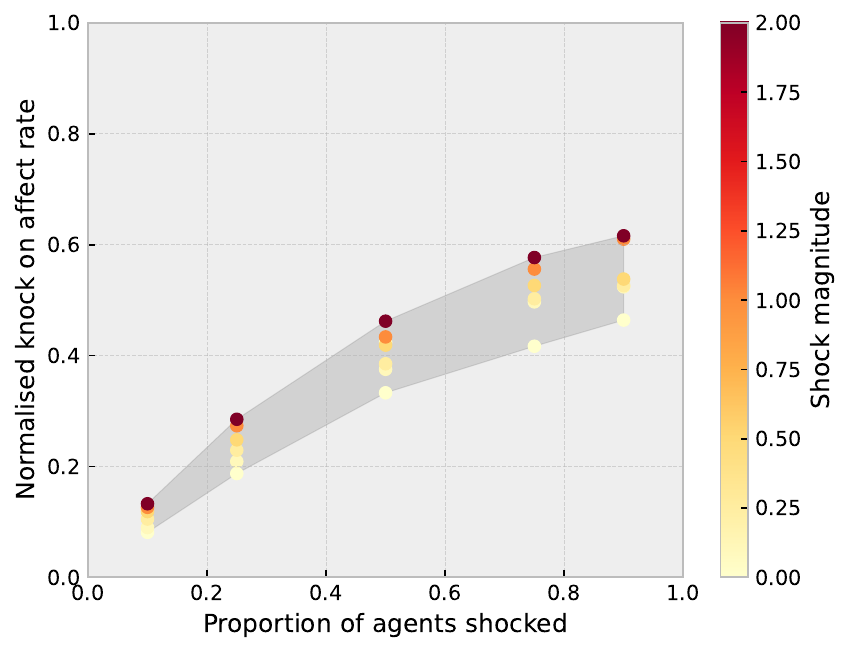}
\caption{Propagation}\label{figAdditionalImpact}
\end{subfigure}
\hspace{2em}
\begin{subfigure}[b]{0.41\textwidth}
\centering
\includegraphics[width=\textwidth]{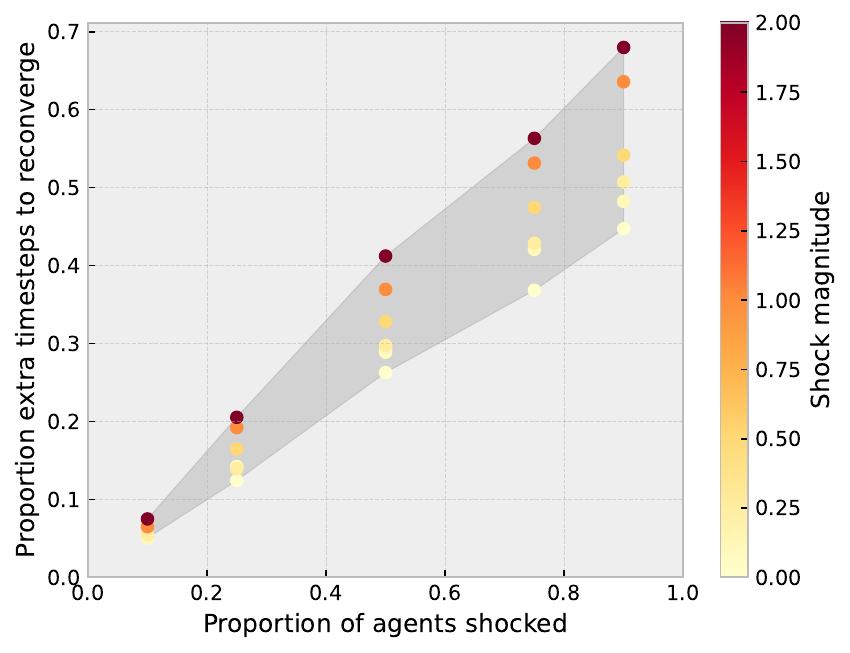}
\caption{Reconvergence speed}\label{figReconvergence}
\end{subfigure}
\caption{The impact of exogenous shocks in a BS network. \cref{figAdditionalImpact} shows the propagation of shocks measured as the additional proportion of unsatisfied agents. \cref{figReconvergence} shows the normalized reconvergence speed (as a proportion of the original convergence speed). Each $\circ$ represents the mean impact for a shock size (darker color = larger magnitude). The filled region shows the effect range across these shock sizes.}
\label{figShocks}
\end{figure*}

To understand how exogenous shocks impact the market, shocks are applied to an agent's value $c_i$, resampling to a new value in $C$. Shocks have a specific size $s$, which controls the variation from the original valuation, i.e., a shock size of $0.1$ means the updated $c_{i,t}$ will be in the range $[c_{i,t-1} * 0.9, c_{i,t-1} * 1.1]$ (clipped to the values in $C$). Shocks are applied to different proportions of agents in the network to analyze their propagation. Shocks only occur after the dynamics have converged to understand how these perturbations affect an otherwise converged system.

The results for the BS network are visualized in \cref{figShocks}. We look at two key quantities: the proportion of additional (non-shocked) agents affected (\cref{figAdditionalImpact}), e.g., due to knock-on effects of one borrower updating their valuation, and the normalized time taken to reconverge (\cref{figReconvergence}).

\textbf{Propagation.} When considering the propagation of shocks (\cref{figAdditionalImpact}), a knock-on effect is observed in all cases, with additional non-shocked agents becoming affected (i.e.,becoming unsatisfied and needing to update their prices). For example, when applying a shock to $25\%$ of the agents, approximately $50\%+$ of the agents become impacted and must update their responses. Generally speaking, the larger the shock magnitude, the more propagation throughout the network is observed, agreeing with real world insights on shock propagation's, for example in production networks \cite{huneeus2018production} and global trade-investments \cite{Starnini2019}, where the magnitude of the shock determines its impact on the overall system.

\textbf{Reconvergence.}
The additional time to converge (\cref{figReconvergence}) increases approximately linearly with the proportion of shocked agents and the shock magnitude, showing the disruption effects. As expected by the theory, irrespective of the shock size, the dynamics eventually settle down and reconverge to a new equilibrium. This reconvergence is significantly faster than the original convergence speed (\cref{figReconvergence}), essentially serving as a warm start to the convergence process, which is particularly true for small shock magnitudes. 

\subsection{Results Summary}
These experimental results support and extend the theoretical guarantees in \cref{section:theory}, by simulating the market dynamic for various trading networks and establishing consistent convergence across market topologies. The best response sequences constructed in the proof of \cref{proposition:convergence-guarantees} can be exponentially long. By contrast, we show experimentally that these markets converge rapidly. Additionally, we demonstrate that more complex network structures also still converge under the proposed market dynamic, opening a promising line of future analysis extending \cref{proposition:convergence-guarantees}. The experimental analysis provided insights into the resulting welfare and shock impacts across various market topologies, expanding the understanding of such market processes.

\section{Discussion and Conclusion}
\label{section:conclusion}
We formulate a decentralized market dynamic in a heterogeneous network of trading agents. Our dynamic captures iterative negotiation among the agents, by allowing buyers (sellers) to refuse trades via making counteroffers that are slightly lower (higher) than the price being offered. We demonstrate theoretically and experimentally how prices converge to an equilibrium via an uncoordinated negotiation (from arbitrary initial prices), providing the first such analysis in general market settings.

Theoretically, we develop a reduction that ensures convergence of many-agent markets iff two-agent markets converge. We apply this reduction to prove that our market dynamic reaches equilibrium for markets with tree topologies, as well as 2-sparse FSMs.
Experimentally, we provide empirical evidence and extensions to the convergence guarantees and convergence process, demonstrating that fully-substitutable markets converge to an equilibrium and do so in a manner significantly faster than suggested by the proofs in our theoretical work. Additionally, we highlight several natural and desirable features of the market, e.g., market reactions to exogenous shocks and the welfare impact on agents based on market compositions.

In addition to proving our \cref{conjecture:two-agent-convergence} about the convergence of two-agent markets with many trades, future work could address more sophisticated choices of counteroffers that may lead to faster convergence to equilibrium, e.g., modifying the step size $\varepsilon$ in the spirit of adaptive stochastic gradient based optimization algorithms \citep{DBLP:journals/corr/KingmaB14, duchi2011adaptive}. Additionally, the agents in our market dynamic optimize their immediate utility, but future iterations could also model strategic traders \citep{vadori2024towards} who consider the long-term impact of their offers.

\section*{Acknowledgments}
Goldberg and Lock were supported by a JP Morgan faculty fellowship and EPSRC grant EP/X040461/1.

\section*{Disclaimer}
This paper was prepared for informational purposes in part  by the Artificial Intelligence Research group of JPMorgan Chase \& Co. and its affiliates ("JP Morgan'') and is not a product of the Research Department of JP Morgan. JP Morgan makes no representation and warranty whatsoever and disclaims all liability, for the completeness, accuracy or reliability of the information contained herein. This document is not intended as investment research or investment advice, or a recommendation, offer or solicitation for the purchase or sale of any security, financial instrument, financial product or service, or to be used in any way for evaluating the merits of participating in any transaction, and shall not constitute a solicitation under any jurisdiction or to any person, if such solicitation under such jurisdiction or to such person would be unlawful.

\bibliography{bibliography}

\appendix

\clearpage

\section{Convergence to Equilibrium}
Here we develop the formal statements and proofs for our theoretical results on market convergence.

\subsection{Two-agent markets}
Consider markets with two agents, and at most two trades between them. In a two-agent market, the market sequence must alternate between the two agents, and so is unique.
We prove that the number of best responses until a market with one trade converges to equilibrium is $O(V)$, and two-agent FSMs with $2$ trades converge after $O(V^2)$ best responses. Here we recall that $V$ represents an upper bound on the absolute values of both agents’ valuation functions and initial offers. We begin by showing that the two-agent market dynamic either terminates or cycles, and any such cycle must consist of an even number of best responses.
Without loss of generality, we set the best response dynamic's discrete step size $\varepsilon$ to $\varepsilon = 1$ in the following.

\begin{lemma}
\label{lemma:offers-are-bounded}
\label{lemma:boxing-lemma}
At any point during the market dynamic, we have $-2V+1 \leq \offer^{b(\omega)}_{\omega}, \offer^{s(\omega)}_{\omega} \leq 2V+1$ for any $\omega \in \Omega$.
So the best response dynamic either terminates, or the sequence of best responses cycles.
Moreover, cycles in two-agent markets have even length.
\end{lemma}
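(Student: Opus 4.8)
The plan is to prove the three assertions in sequence, treating boundedness as the substantive step and deriving the other two from it.

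\emph{Boundedness.} I would argue by induction on the number of best responses that the invariant ``every offer lies in $[-2V+1,\,2V+1]$'' is preserved. The base case holds because the initial offers are bounded by $V$ in absolute value. For the inductive step, consider an agent $i$ best responding on a trade $\omega$ and facing her counterpart's offer $c = \offer^{-i}_\omega$, which lies in the box by the induction hypothesis. \Cref{definition:best-response} sets her new offer either to $c$ (if she demands $\omega$) or one step away from $c$ (if she does not). The key economic observation is that marginal valuations are confined to $[-2V,\,2V]$, since every valuation lies in $[-V,V]$ and $v^i(\emptyset)=0$. Hence a buyer never demands a buying trade priced above her marginal value, which is at most $2V$ (so a matched offer is at most $2V$), and when she declines she steps only one below the counterpart's already-bounded offer; the symmetric statement holds for a seller, who never declines to sell below her marginal cost and only steps one above the counterpart's offer. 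A short case analysis on whether $i$ is a buyer or seller and whether she demands $\omega$ then shows the new offer remains in the box.

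\emph{Terminate or cycle.} Since offers are bounded integers and the set $U$ of unsatisfied agents is a subset of the finite agent set, the market dynamic has only finitely many states $(U,\offers)$. In a two-agent market the responding agent is determined by the state (the agents alternate, as only the counterpart is marked unsatisfied after a best response), so the trajectory is deterministic. A deterministic walk on a finite state space is eventually periodic, by the pigeonhole principle: either it reaches a state with $U=\emptyset$, i.e.\ equilibrium, or it revisits a state and thereafter repeats, producing a cycle.

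\emph{Even cycle length.} I would track the component $U$ of the state. During any non-terminating run of a two-agent market, each best response by agent $i$ removes $i$ from $U$ and inserts only her counterpart, so $U$ alternates between $\{1\}$ and $\{2\}$. A cycle returns to an identical state, and in particular to the same $U$; since $U$ flips with every best response, the number of steps in the cycle must be even. The main obstacle is the boundedness step: one must pin down the precise window and verify, via the marginal-value bound and the tie-breaking convention, that neither the ``match'' branch nor the ``step-away'' branch of a best response can push an offer past the stated bounds. The remaining two claims are then essentially bookkeeping---finiteness of the state space together with the deterministic alternation structure of two-agent markets.
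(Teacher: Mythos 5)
Your proposal follows essentially the same route as the paper's proof for all three claims: the boundedness step rests on the identical observation that marginal valuations lie in $[-2V,2V]$, so an agent never \emph{accepts} an offer beyond that range while a rejection moves only one step from the counterpart's already-bounded offer, and the terminate-or-cycle and even-length claims follow, as in the paper, from finiteness of the offer space together with the memoryless, alternating structure of two-agent best responses. One caveat on the "precise window" you flag: the symmetric invariant $[-2V+1,2V+1]$ does not close under the step-away branch (e.g.\ a seller may legitimately match a buyer's offer as low as $-2V$, after which the buyer's rejection lands at $-2V-1$); the paper's own proof works with $-2V-1\le\offer^{b(\omega)}_\omega,\offer^{s(\omega)}_\omega\le 2V+1$, so the lemma's stated lower bound appears to be a typo and your induction should be run with this wider (or a buyer/seller-asymmetric) box.
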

\begin{proof}
The condition $-2V-1 \leq \offer^{b(\omega)}_{\omega}, \offer^{s(\omega)}_{\omega} \leq 2V + 1$ holds by definition of $V$ for the initial offers of the dynamic.

We now show that no offer can increase to a value greater than $2V+1$ at any point in the best response dynamic. (The argument that no offer can decrease to a value less than $-2V-1$ is analogous.) Fix some trade $\omega \in \Omega$. For $\offer^{b(\omega)}_{\omega}$ to increase to a value greater than $2V+1$, the seller must at some point set $\offer^{s(\omega)}_{\omega}$ to $2V+1$, the buyer must then also set their corresponding offer $\offer^{b(\omega)}_{\omega}$ to $2V+1$, and then the seller eventually sets their offer to $2V+2$. We now show that the buyer will not do this.

Suppose $\offers$ are the offers immediately after the seller of trade $\omega$ sets her offer $\offer^{s(\omega)}_{\omega}$ to $2V+1$. Assume for a contradiction that the buyer sets their offer to $\offer^{b(\omega)}_\omega = 2V+1$, so he demands a bundle $\Phi$ with $\omega \in \Phi$ at $\offers$. The difference of his utility for bundles $\Phi$ and $\Phi \setminus \{ \omega \}$ at $\offers$ is $v^{b(\omega)}(\Phi) - v^{b(\omega)}(\Phi \setminus \{\omega\}) - \offer^{s(\omega)}_{\omega} \leq 2V - 2V - 1 < 0$, contradicting the fact that he demands bundle~$\Phi$.

By the above, the number of different combinations of offers between $-2V-1$ and $2V+1$ that an agent can make is finite. So in any best response sequence that does not terminate, an agent eventually repeats herself by making the same offers as in a previous best response. As an agent's best response depends only on her counterparts' current offers (so she is `memory-less'), the sequence cycles after the first time an agent repeats herself.

Finally, if the market has two agents, the fact that the sequence alternates between the two agents implies that the cycle is even.
\end{proof}

Our proof makes use of the following intuitive notation to represent best response sequences for two-agent markets. (Recall that such sequences alternate between the two agents.) Each such sequence is associated with two `parallel' \textit{rows}, one for each trade, that consist of (finite or infinite) strings of characters $A$ and $R$. (See the example below). Each of the \textit{columns} in this sequence corresponds to a best response by an agent: an $A$ means that the agent accepts her counterpart's offer (and thus sets her offer to match her counterpart's), and an $R$ means that the agent rejects her counterpart's offer (and thus sets her offer to be off by $\varepsilon$, as defined in \cref{definition:best-response}). The odd columns correspond to the agent who initiates the sequence, whilst even columns correspond to the other agent. As an example, we give the cyclic best response sequence of \cref{example:complements-cycle} between a buyer (b) and a seller (s):

\begin{center}
\begin{tabular}{llllllll}
\small
& s & b & s & b \\
\midrule
$\omega$ & $R$ & $A$ & $A$ & $R$ \\
$\varphi$ & $R$ & $R$ & $A$ & $A$
\end{tabular}
\end{center}
In her first best response, the agent rejects both the buyer's offers, so she sets her offer to one higher than the buyer's offers. In the buyer's subsequent best response, he matches the seller's offer for trade $\omega$ and sets an offer one less than the seller's offer for trade $\varphi$.
From this notation, we can also see straightforwardly that this sequence cycles: in both rows, the buyer and seller each change their offers once, so the offers before and after the four best responses is identical.

We will omit the agent and trade designations below, due to symmetry arguments.



Suppose the market cycles. Then we can ignore the non-cyclic beginning of the string representation, and focus entirely on the cycle at the end. A substring of a row is an \textit{$A$-string} if it consists only of $A$s, and an $A$-string is \textit{maximal} if it is not preceded or followed by any additional $A$s beyond the string. (Maximal) $R$-strings are defined likewise.

We observe structural properties of the rows of a cycle.

\begin{lemma}
\label{lemma:consecutive-BRs}
The market dynamic terminates iff its BR sequence contains two identical consecutive columns.
\end{lemma}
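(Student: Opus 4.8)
The plan is to reduce the global property ``the dynamic terminates'' to a purely local property of adjacent columns, and then to identify that local property with ``two consecutive columns coincide''. Throughout I would work in the main phase, where the two offers of every trade differ by exactly $0$ or $1$ and the two agents strictly alternate, so that the agent responding in column $t$ last acted in column $t-2$ and thus has a well-defined previous offer.

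First I would establish the key bridge: \emph{the agent responding in column $t$ leaves all of her offers unchanged if and only if columns $t-1$ and $t$ are identical}. Since the best-response update acts on each trade separately (for a trade in the demanded bundle the agent matches her counterpart's offer, i.e.\ an $A$; otherwise she moves off by $\varepsilon = 1$, i.e.\ an $R$), it suffices to prove this one row at a time. Fixing a trade $\omega$ and, using the buyer/seller symmetry, taking the column-$t$ responder to be the buyer, I would write $p$ for the buyer's own offer after column $t-2$ and run through the four $A/R$ combinations for columns $t-1$ (seller) and $t$ (buyer). In the two \emph{agreeing} cases ($AA$, $RR$) the seller first sets $\offer^{s(\omega)}_\omega$ to $p$ or $p+1$ and the buyer then returns $\offer^{b(\omega)}_\omega$ to exactly $p$, leaving the offer unchanged; in the two \emph{disagreeing} cases ($AR$, $RA$) the same arithmetic yields $p-1$ or $p+1$, a genuine change. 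Taking the conjunction over all rows then gives the bridge.

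Next I would connect a no-change best response to termination by bookkeeping the unsatisfied set $U$. In the alternating main phase exactly one agent is unsatisfied before each best response; when the responder modifies an offer her counterpart is re-added to $U$ and the alternation continues, whereas when she modifies no offer nothing is added and $U$ becomes empty, so the dynamic terminates at that step. Hence the dynamic terminates at column $t$ iff the column-$t$ responder makes no change, which by the bridge holds iff columns $t-1$ and $t$ coincide. Combining this with \cref{lemma:boxing-lemma} (the dynamic either terminates or cycles) closes both directions: a terminating run ends with a no-change step, so its final two columns coincide; conversely, any two identical consecutive columns force a no-change step and hence termination, so a cycling run can contain no such pair.

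I expect the main obstacle to be the bridge lemma, specifically getting the four-case offer arithmetic exactly right while suppressing the buyer/seller and trade labels via the claimed symmetry --- one must verify that the agreeing columns genuinely cancel ($+1$ then $-1$, or $0$ then $0$) and that this cancellation remains decoupled across the two trades even though the demanded \emph{bundle} couples the two rows. The termination bookkeeping and the appeal to \cref{lemma:boxing-lemma} should then be routine.
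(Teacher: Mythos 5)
Your proposal is correct and follows essentially the same route as the paper: identical consecutive columns force the responder's offers to be unchanged (so no counterpart is re-marked unsatisfied and the dynamic halts), and conversely a terminating run must end with a no-change best response whose column duplicates the previous one. You simply make explicit the per-row offer arithmetic and the bookkeeping of the unsatisfied set that the paper's two-sentence proof leaves implicit.
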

\begin{proof}
Suppose two consecutive columns are identical, i.e., consecutive best responses of the two agents are the same for both trades. After these best responses, no agent will change their offers any more. So both agents are satisfied and the market dynamic terminates. Conversely, suppose the market dynamic terminates. Then the last best response must leave all offers unchanged, and match the demand of the penultimate best response. So the result follows.
\end{proof}

\begin{lemma}
\label{lemma:even-strings}
Every row of the cycle contains a maximal $A$-string or maximal $R$-string of even length.
\end{lemma}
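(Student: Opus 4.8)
The plan is to reduce the combinatorial claim, for each trade separately, to a single arithmetic identity obtained by tracking how that trade's offers evolve along the cycle. Fix a trade $\omega$ with its row, and set $a_t = 1$ if column $t$ is an $A$ and $a_t = 0$ if it is an $R$. Let $n$ be the cycle length, which is even by \cref{lemma:boxing-lemma}, and read all indices cyclically modulo $n$ (so $a_0 = a_n$). Throughout the cycle we are in the main phase, so the gap $g_t \coloneqq \offer^{s(\omega)}_\omega - \offer^{b(\omega)}_\omega$ facing the mover in column $t$ lies in $\{0,1\}$; since an $A$ forces the offers to coincide (gap $0$) and an $R$ sets them off by $\varepsilon=1$ (gap $1$), I get the recursion $g_t = 1 - a_{t-1}$.

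Next I would track $L_t \coloneqq \offer^{b(\omega)}_\omega + \offer^{s(\omega)}_\omega$ after column $t$. A short case analysis on whether the mover is the buyer or the seller, and on $A$ versus $R$, gives $\Delta L_t = \eta_t (a_t - a_{t-1})$, where $\eta_t = +1$ on buyer moves and $\eta_t = -1$ on seller moves. Because the two agents alternate, $\eta_t = (-1)^{t-1}\eta$ for a fixed sign $\eta$, which is well defined modulo $n$ precisely because $n$ is even. As the sequence cycles, $L$ returns to its initial value, so $\sum_{t=1}^n \Delta L_t = 0$. A signed telescoping of $\sum_t (-1)^{t-1}(a_t - a_{t-1})$, reindexing the $a_{t-1}$ term, collapses it to $2\sum_t (-1)^{t-1} a_t$, yielding the key identity
\[
\sum_{t=1}^n (-1)^t a_t = 0,
\]
i.e., each row has as many $A$s in even columns as in odd columns.

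With the identity in hand the conclusion is purely combinatorial. The cyclic row decomposes into maximal $A$-strings and maximal $R$-strings that alternate around the cycle, so there are equally many of each, say $m$. An $R$-string contributes $0$ to $\sum_t (-1)^t a_t$, while a maximal $A$-string of length $\ell$ starting at position $s$ contributes $\sum_{j=0}^{\ell-1}(-1)^{s+j}$, which is $0$ when $\ell$ is even and $(-1)^s$ when $\ell$ is odd. Suppose for contradiction that every maximal string (of either type) has odd length. Then the offset between the starts of consecutive $A$-strings, namely the length of an $A$-string plus the length of the intervening $R$-string, is even, so all $A$-strings begin at positions of the same parity; their contributions all equal a common $(-1)^s$, giving $\sum_t (-1)^t a_t = \pm m \neq 0$ (here $m \geq 1$ since both symbols occur), contradicting the identity. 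The degenerate cases in which the row is entirely $A$s or entirely $R$s yield a single maximal string of length $n$, which is even. Hence every row contains a maximal $A$-string or maximal $R$-string of even length.

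I expect the main obstacle to be the first two steps: deriving $\Delta L_t = \eta_t(a_t - a_{t-1})$ cleanly and carrying out the signed telescoping correctly, since these require careful bookkeeping of the buyer/seller alternation, the gap recursion $g_t = 1 - a_{t-1}$, and the even cycle length. Once the identity $\sum_t (-1)^t a_t = 0$ is in place, the parity argument on run lengths is routine.
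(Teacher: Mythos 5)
Your proof is correct, and it takes a genuinely different (more algebraic) route than the paper's. The paper argues directly under the contradiction hypothesis: if every maximal string is odd, then (by the same parity observation you use) every $A$-string is initiated by the same agent and every $R$-string by the other, and then a monotonicity argument shows that both agents' offers for $\omega$ strictly drift in one direction over each traversal of the cycle, contradicting the boundedness of offers from \cref{lemma:boxing-lemma}. You instead extract an \emph{unconditional} conservation identity: from the gap recursion $g_t = 1 - a_{t-1}$ and the update rule you get $\Delta L_t = \eta_t(a_t - a_{t-1})$ with alternating $\eta_t$ (your case analysis checks out: the buyer's $A$ raises her offer by the gap, her $R$ lowers it by $1$ minus the gap, and symmetrically for the seller), and periodicity of the offers plus the signed telescoping gives $\sum_t (-1)^t a_t = 0$ for every row of every cycle. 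The all-odd hypothesis then fails by a pure parity count, since all $A$-strings start at the same parity and each contributes $(-1)^s$. Both proofs share the same combinatorial engine --- only the first move of each maximal string changes an offer, and all-odd run lengths pin down which agent initiates which string --- but your version buys a reusable invariant (equally many $A$s in odd and even columns) and needs only periodicity of the cycle rather than the quantitative bound of \cref{lemma:boxing-lemma}, at the cost of more bookkeeping; the paper's drift argument is shorter and stays closer to the economic interpretation of the offers. Your handling of the degenerate all-$A$/all-$R$ rows (a single maximal string of even length $n$) is a detail the paper leaves implicit.
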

\begin{proof}
Fix a row associated with trade $\omega$. We refer to the buyer of $\omega$ as the \textit{buyer} and the seller of $\omega$ as the \textit{seller}. 

Suppose for a contradiction that the row consists of alternating $A$-strings and $R$-strings of odd length only. As the cycle has even length, we have at least one of each such string. Moreover, each $A$-string is initiated by the same agent, and likewise each $R$-string is initiated by the other agent. Without loss of generality, suppose the $A$-strings are initiated by a buyer. (The case with a seller is analogous.)

Consider any such $A$-string in the row. It is preceded and followed by at least one $R$. As the buyer initiates this string, her offer increases after her first best response of the string and remains the same after all her subsequent best responses of the string. The seller's offers remain unchanged after all his best responses of the string. Secondly, consider any maximal $R$-string in the row, which has odd length and is initiated by the seller. The seller's offer increases after his first best response, and remains the same for all subsequent best responses, while the buyer's offers are unchanged after all her best responses.

As the row contains at least one $A$-string and one $R$-string, the buyer and seller's offers for trade $\omega$ either both strictly increase or both strictly decrease across iterations of the cycle. But this contradicts \cref{lemma:boxing-lemma}, so the row must contain an even-length $A$-string or $R$-string.
\end{proof}

We now show the first statement of \cref{proposition:two-agents-convergence}: the best response dynamic for a market with single trade $\omega$ terminates after $O(V)$ best responses. 

\begin{proof}[Proof of the first part of \cref{proposition:two-agents-convergence}]
Suppose the market cycles. This cycle consists of a single row. \Cref{lemma:consecutive-BRs} implies that the row must consist of alternating $A$s and $R$s, while \cref{lemma:even-strings} tells us that the row must have at least two consecutive $A$s or $B$s, a contradiction.
\end{proof}

Now consider two-agent markets with two trades. We develop the proof that these markets terminate for the case that both trades share the same buyer and seller (so one agent is the buyer of both trades, and the other agent is the seller). The case with two `crossing' trades is proved analogously.

\begin{observation}
\label{observation:mixed-row}
Every row of a cycle for a market with two agents and two trades contains both $A$s and~$R$s.
\end{observation}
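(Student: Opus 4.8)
The plan is to argue by contradiction using the two structural lemmas already established for cycles. Suppose the market cycles and that some row---say the one associated with trade $\omega$---contains only $A$s or only $R$s. I restrict attention to the cyclic part of the best response sequence, treating its columns as arranged periodically, and show that this assumption forces the other row to be incompatible with \cref{lemma:even-strings}.

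First I would record the immediate consequence of the assumption. If the $\omega$-row is constant (all $A$ or all $R$), then any two consecutive columns already agree in their $\omega$-entry. Since the market cycles, \cref{lemma:consecutive-BRs} guarantees that no two consecutive columns are identical. Hence, for every pair of consecutive columns, the entries in the $\varphi$-row must differ, i.e.\ the $\varphi$-row alternates strictly between $A$ and $R$ around the entire cycle. The even length of the cycle, from \cref{lemma:boxing-lemma}, makes this alternation consistent across the wrap-around.

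The key point---and the one easiest to overlook---is that the constant $\omega$-row does \emph{not} itself contradict \cref{lemma:even-strings}: an all-$A$ (or all-$R$) row of a cycle of even length $2n$ is a single maximal string of even length $2n$, which satisfies that lemma. The contradiction instead comes from the $\varphi$-row, since a strictly alternating row has every maximal $A$-string and every maximal $R$-string of length exactly $1$, so it contains no maximal string of even length. This contradicts \cref{lemma:even-strings} applied to the $\varphi$-row. The all-$R$ case is symmetric to the all-$A$ case, so both rows must contain both symbols.

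I expect no serious obstacle here, as the statement follows quickly from \cref{lemma:consecutive-BRs,lemma:even-strings}; the only care needed is (i) reasoning cyclically so that the alternation of the $\varphi$-row holds across the wrap-around of the cycle, and (ii) noting explicitly that the offending row is the $\varphi$-row rather than the constant $\omega$-row. As an alternative route that avoids this combinatorics, one could observe that a constant $\omega$-row fixes the price of $\omega$ throughout the cycle; by full substitutability this collapses the two-trade dynamic into an effective single-trade dynamic for $\varphi$, which must terminate by the first part of \cref{proposition:two-agents-convergence}, again contradicting the existence of a cycle.
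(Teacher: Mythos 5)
Your argument is correct and is essentially the paper's own proof: assume one row is constant, invoke \cref{lemma:consecutive-BRs} to force the other row to alternate strictly, and derive a contradiction with \cref{lemma:even-strings} applied to that other row. The extra care you take (noting the contradiction lies in the alternating row, not the constant one, and that the even cycle length from \cref{lemma:boxing-lemma} makes the alternation consistent across the wrap-around) is a useful elaboration of the same one-line argument.
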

\begin{proof}
If one row consists only of $A$s, then the other row must contain alternating $A$s and $R$s by \cref{lemma:consecutive-BRs}, contradicting \cref{lemma:even-strings}. 
\end{proof}

As agents have fully-substitutable quasi-linear preferences, we can exclude the possibility of certain three-column fragments showing up in any cycle.
\begin{lemma}
\label{lemma:two-trade-fragments}
The cycle of a market with a buyer and seller, and two trades does not contain three-column fragments
\begin{center}
\small
\begin{tabular}{lll}
$A$ & $A$ & $R$ \\
$A$ & $R$ & $*$
\end{tabular}
\hspace{1.5em}
\begin{tabular}{lll}
$R$ & $A$ & $A$ \\
$R$ & $R$ & $*$
\end{tabular}
\hspace{1.5em}
\begin{tabular}{lll}
$A$ & $A$ & $R$ \\
$*$ & $A$ & $R$
\end{tabular}
\hspace{1.5em}
\begin{tabular}{lll}
$R$ & $R$ & $A$ \\
$A$ & $R$ & $A$
\end{tabular}
\hspace{1.5em}
\hspace{1.5em}
\begin{tabular}{lll}
$R$ & $A$ & $R$ \\
$R$ & $R$ & $A$
\end{tabular}
\hspace{1.5em}
\end{center}
or the fragments obtained by swapping the two rows. The fragments can be initiated by either agent and $*$ denotes any of $A$ or~$R$.
\end{lemma}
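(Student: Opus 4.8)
The plan is to argue by contradiction for each of the five listed fragments (together with their row-swaps and both choices of initiating agent): assuming the fragment occurs inside the cycle, I will derive a contradiction either with \cref{lemma:consecutive-BRs} or with the quasilinearity and full substitutability of the demand of the agent who acts in columns~1 and~3. The central observation is that the first and third columns of a three-column fragment are best responses by the \emph{same} agent $X$, with the other agent $Y$ acting in the middle column. Hence comparing $X$'s accept/reject decisions in columns~1 and~3 is exactly a comparison of two uniquely demanded bundles at two price vectors, to which \cref{definition:full-substitutability} applies; this comparison drives most cases, the rest following from \cref{lemma:consecutive-BRs}.

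The first step is to compute how the price vector that $X$ faces moves from column~1 to column~3. Reading off the per-trade pattern of columns~1 and~2 as a pair in $\{A,R\}^2$ and unwinding the best-response rule of \cref{definition:best-response}, one checks that $X$'s faced price for that trade is unchanged for the patterns $(A,A)$ and $(R,R)$, moves by $+1$ for $(A,R)$, and moves by $-1$ for $(R,A)$ when $X$ is the buyer, with both signs reversed when $X$ is the seller. It therefore suffices to treat $X$ as the buyer: the seller case is the buyer case under the buyer--seller duality that swaps the two roles, reverses all price directions, and leaves the $A$/$R$ semantics intact, and the row-swapped fragments are handled identically.

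The second step matches each fragment to one of three contradiction mechanisms. \textbf{(a)} If two consecutive columns coincide, \cref{lemma:consecutive-BRs} forces termination, contradicting that we are inside a cycle; this disposes of the second fragment when $*=R$ (its last two columns are both $(A,R)$) and of the third when $*=A$ (its first two columns are both $(A,A)$). \textbf{(b)} If $X$'s faced price is unchanged on some trade but $X$'s decision for that trade flips between columns~1 and~3, a pure quasilinearity argument applies: the marginal utility of that trade given the common sub-bundle depends only on the unchanged price, so it cannot be simultaneously below and above that price. This handles the fourth fragment (where $\varphi$ is demanded in both columns~1 and~3 while $\omega$ is declined and then taken at a fixed $\omega$-price) and the $*=R$ branch of the third fragment. \textbf{(c)} Otherwise the faced price moves monotonically in exactly one buying trade while the other trade's price is fixed, and the recorded change in demand for the fixed-price trade moves in the direction forbidden by the substitutes inequality of \cref{definition:full-substitutability}; this handles the first and fifth fragments and the $*=A$ branch of the second.

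The main obstacle I anticipate is not any single step but the bookkeeping. For every fragment, every value of $*$, both row orders, and both initiating agents, I must verify that the induced price movement is genuinely of the monotone, one-coordinate type to which \cref{definition:full-substitutability} literally applies, and that the demanded bundles recorded by the columns contradict the corresponding inequality (or produce \emph{coincident consecutive columns}, as opposed to merely coincident demanded bundles at distinct prices, when invoking \cref{lemma:consecutive-BRs}). The duality and row-swap symmetries cut the genuinely distinct cases down considerably, but some care is needed to confirm that these symmetries map accept/reject patterns and price directions consistently, so that the buyer analysis transfers verbatim to the seller.
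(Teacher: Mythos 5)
Your proposal is correct and follows essentially the same route as the paper: you compare the demands of the agent acting in columns~1 and~3 at the two price vectors induced by the intervening column patterns, and derive contradictions from quasilinearity (when the two demanded bundles differ only in a trade whose faced price is unchanged) or from \cref{definition:full-substitutability} (when one price moves monotonically and the fixed-price trade's demand moves the forbidden way); your price-movement table over $\{A,R\}^2$ matches what the paper computes implicitly, and your assignment of each fragment to a mechanism checks out in every case. One respect in which your version is actually more complete: the paper dismisses the second fragment ``analogously by substitutability,'' but for $*=R$ neither substitutability nor quasilinearity yields a contradiction --- there the agent's demand moves from $\emptyset$ to $\{\omega\}$ as her own price for $\omega$ shifts favorably by one step, which is perfectly consistent with FS --- and your mechanism~(a), observing that the last two columns of that fragment coincide so \cref{lemma:consecutive-BRs} forces termination inside a putative cycle, supplies exactly the missing argument (likewise for the $*=A$ branch of the third fragment). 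No gaps beyond the bookkeeping you already flag.
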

\begin{proof}
Let $\omega$ and $\varphi$ be the trades associated with the first and second row, respectively. We assume that the first and last column correspond to the seller, and the middle column corresponds to the buyer. (The converse case is symmetric.)

Consider the first fragment. The buyer's offer for $\omega$ remains unchanged and her offer for $\varphi$ reduces by $1$ so, by substitutability, the seller's demand of trade $\omega$ cannot reduce. But the seller initially demands $\omega$ and subsequently does not demand $\omega$ any more, a contradiction. The impossibility of the second fragment follows analogously by substitutability.

Now consider the third fragment. The buyer's offer $\offer_{\omega}$ for trade $\omega$ does not change after her best response, so by definition of demand we see $v^s(\{ \omega \}) + \offer_{\omega} > v^s(\emptyset)$ from the seller's first best response and $v^s(\{ \omega \}) + \offer_{\omega} < v^s(\emptyset)$ from the seller's second best response, a contradiction. The impossibility of the remaining fragments follows analogously by quasilinearity of demand.
\end{proof}

We are now ready to prove the second statement in \cref{proposition:two-agents-convergence}: two-agent FSMs with two trades converge.

\begin{proof}[Proof of the second part of \cref{proposition:two-agents-convergence}]
Fix some market with two agents and two trades, $\omega$ and $\varphi$. Suppose that both trades have the same buyer, and thus also the same seller. (The other case is proved analogously.)
Suppose for a contradiction that the market cycles. Associate $\omega$ with the top row in our cycle representation. We already know that the cycle has even length. If it had length 2, \cref{lemma:even-strings,observation:mixed-row} would contradict each other, so the cycle has even length of at least $4$. Moreover, the top row has a maximal $A$- or $R$-string of even length. Suppose first that it has a maximal $A$-string of even length $k$. So the cycle contains the fragment
\begin{center}
\begin{tabular}{llllllll}
\small
$R$ & $A$ & $A$ & $\cdots$ & $A$ & $A$ & $R$ & $X_{k+2}$ \\
$Y_0$ & $Y_1$ & $Y_2$ & $\cdots$ & $Y_{k-1}$ & $Y_{k}$ & $Y_{k+1}$ & $Y_{k+2}$ 
\end{tabular}
\end{center}

\Cref{lemma:two-trade-fragments} rules out all possibilities for $(Y_{k-1}, Y_{k},Y_{k+1})$ apart from $(R,A,A)$. \cref{lemma:consecutive-BRs,lemma:two-trade-fragments} then tell us that $Y_l = R$ for odd $l \leq k-1$, and $Y_l = A$ for even $l \leq k$. By \cref{lemma:two-trade-fragments} (first fragment with swapped rows), $Y_{k+2}=A$ and, by \cref{lemma:consecutive-BRs}, $X_{k+2}=A$. So we get
\begin{center}
\begin{tabular}{llllllll}
\small
$R$ & $A$ & $A$ & $\cdots$ & $A$ & $A$ & $R$ & $A$ \\
$A$ & $R$ & $A$ & $\cdots$ & $R$ & $A$ & $A$ & $A$ 
\end{tabular}
\end{center}
Suppose without loss of generality that the fragment is initiated by the seller (the case with the buyer is symmetric). Let $\offers$ and $\offers'$ be the sellers' offers before the buyer's first and last best response. By construction, $\offer'_\omega = \offer_\omega$ and $\offer'_\varphi > \offer_\varphi$. The buyer demands $\{\varphi\}$ at $\offers$ and $\{\omega, \varphi \}$ at $\offers'$. But this contradicts quasilinearity of demand, so we are done.

Now suppose that the top row only has a maximal $R$-string of even length $k$, which is preceded and followed by an $A$. Without loss of generality, assume the $R$-string is initiated by a buyer. By \cref{lemma:consecutive-BRs}, the other row must alternate between $R$ and $A$ along the $R$-string. If this alternation starts with an $R$, we get fragment
\begin{center}
\begin{tabular}{lllllll}
\small
$A$ & $R$ & $R$ & $\cdots$ & $R$ & $R$ & $A$\\
$*$ & $R$ & $A$ & $\cdots$ & $R$ & $A$ & $R$ 
\end{tabular}
\end{center}
Here the last $R$ in the bottom row follows by exclusion from \cref{lemma:two-trade-fragments}. As the $R$-string is even, we see that the seller makes the same offer for trade $\omega$ before the buyer's first and last best response. So the fact that the buyer initially demands $\emptyset$ and then demands $\{\omega\}$ contradicts quasilinearity of demand. Secondly, suppose the alternation of $A$ and $R$ below the $R$-string starts with an $A$. By repeated application of \cref{lemma:two-trade-fragments} (recall that we can swap rows in \cref{lemma:two-trade-fragments}), we then get fragment
\begin{center}
\begin{tabular}{llllllll}
\small
$A$ & $R$ & $R$ & $\cdots$ & $R$ & $R$ & $A$ & $R$\\
$R$ & $A$ & $R$ & $\cdots$ & $A$ & $R$ & $R$ & $R$ 
\end{tabular}
\end{center}
Similarly to above, the buyer makes the same offer for trade $\omega$ before the seller's first and last best response. As the seller initially demands $\{\omega\}$ and then demands $\emptyset$, this contradicts quasilinearity of demand.
\end{proof}

As stated in \cref{conjecture:two-agent-convergence}, we expect that two-agent markets with any number $m$ of trades converge in $O(V^m)$ steps.

\subsection{Markets with more than two agents}

Recall that $V$ is an upper bound on the value $|v^i(\Psi)|$ for any bundle $\Psi \subseteq \Omega_i$ for any agent $i$, and $f(m,V) \leq \infty$ denotes the worst-case number of best responses required for any two-agent FSM with $m$ trades to converge.

We now turn to the proof of \cref{proposition:convergence-guarantees}. In this proof, we reduce markets in two ways, which make use of valuation function transformations introduced in \citet[Section 5]{hatfield2015full}. Importantly, these transformations preserve full substitutability. The first reduction, which restricts a given market $\market$ to a subset of agents, allows us to construct a specific best response sequence for $\market$. The second reduction, which reduces $\market$ to a two-agent market, then allows us to prove that this best response sequence converges.

\paragraph{Restricted markets.}
Fix market $\market = (I, \Omega, v)$ in state $(U, \offers)$, and a subset $J \subseteq I$ of the market's agents. A trade in $\Omega$ is \textit{internal} if its buyer and seller are both in $J$, and \textit{external} if exactly one is in $J$. For any $\Psi \subseteq \Omega$, we let $\Psi^{int}$ and $\Psi^{ext}$ denote its interior and exterior trades.

We define the \textit{restricted market} $\widehat{\market}(J, \offers) = (J, \Omega^{int}, \widehat{v})$ obtained from $\market$ and offers $\offers$ by restricting the market to agents~$J$ and endogenizing the external trades $\omega \in \Omega^{ext}_i$ of each external agent $i$ at fixed prices $\offer^{-i}_{\omega}$. To achieve this, we define the valuation $\widehat{v}^i$ of each agent $i \in J$ for each $\Phi \subseteq \Omega^{int}_i$ as
\begin{equation}
\label{eq:restricted-valuation}
\widehat{v}^i(\Phi) \coloneqq 
\max_{\Psi \subseteq \Omega^{ext}_i} \left [ v^i(\Phi \cup \Psi) - \sum_{\omega \in \Psi} \chi^i_{\omega}\offer^{-i}_{\omega} \right ].
\end{equation}
These valuations implicitly endow each agent $i \in J$ with the opportunity to purchase (any subset of) the trades in $\Omega^{ext}_i$ at `frozen prices' $\offer^{-i}_{\omega}$, their counterparts' offers, if this maximizes their utility; even though these trades do not appear in the market itself. For internal agents, $\widehat{v}^i = v^i$ as $\Omega^{ext}_i = \emptyset$. We note that the bundle $\Psi$ maximizing \eqref{eq:restricted-valuation} is unique because of the way we perturb valuations to break ties. Moreover, \citet[Theorem 6]{hatfield2015full} show that $\widehat{v}^i$ is fully-substitutable as long as $v^i$ is. We also map the state $(U, \offers)$ of the original market to state $(\widehat{U}, \widehat{\offers})$ of the restricted market by letting $\widetilde{U} = U \cap \Omega^{int}$ and $\widehat{\offers}$ be the offers $\offers$ restricted to the trades $\Omega^{int}$.

We now show that if we apply the same best response sequence with agents from $J$ to $\market$ in state $(U, \offers)$ and $\widehat{\market}(J,\offers)$ in state $(\widehat{U}, \widehat{\offers})$, then the same subset of agents $J$ is satisfied at the end. In particular, this means that if there exists a terminating best response sequence for the restricted market, then there exists a best response sequence for the original market after which all agents in $J$ are satisfied.

\begin{lemma}
\label{lemma:restricted-market}
Fix market $\market$ in state $(U, \offers)$. If $\widehat{\market}(J,\offers)$ in state $(\widehat{U}, \widehat{\offers})$ terminates after some best response sequence, then applying the same best response sequence in market $\market$ in state $(U, \offers)$ results in all agents $J$ being satisfied.
\end{lemma}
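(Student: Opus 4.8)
The plan is to show that each agent $i \in J$, when best responding in the restricted market $\widehat{\market}(J,\offers)$, demands exactly the internal part of the bundle she would demand in the original market $\market$, provided the offers on internal trades agree in both markets. If I can establish this correspondence of demands, then the best responses on internal trades are identical in the two markets, and hence the satisfied/unsatisfied status of every agent in $J$ evolves identically. Since the restricted market terminates (all of $J$ satisfied there), the same sequence leaves all of $J$ satisfied in $\market$.

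The key algebraic step is to relate the demand in $\widehat{\market}$ to the demand in $\market$. Fix an agent $i \in J$ and prices $\pb$ on her internal trades $\Omega^{int}_i$ (equal in both markets), together with her counterparts' frozen offers $\offer^{-i}_\omega$ on external trades. In the original market, agent $i$'s utility for a bundle $\Phi \cup \Psi$ (with $\Phi \subseteq \Omega^{int}_i$, $\Psi \subseteq \Omega^{ext}_i$) is
\begin{equation}
u^i(\Phi \cup \Psi, \pb) = v^i(\Phi \cup \Psi) - \sum_{\omega \in \Phi} \chi^i_\omega p_\omega - \sum_{\omega \in \Psi} \chi^i_\omega \offer^{-i}_\omega.
\end{equation}
Maximizing over $\Psi$ for fixed $\Phi$ gives precisely $\widehat{v}^i(\Phi) - \sum_{\omega \in \Phi} \chi^i_\omega p_\omega$ by the definition in \eqref{eq:restricted-valuation}, which is the utility of bundle $\Phi$ in the restricted market. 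Hence the maximizer $\Phi^*$ over internal bundles in $\widehat{\market}$ coincides with the internal part of the maximizer $(\Phi^* \cup \Psi^*)$ over all bundles in $\market$. Because the tie-breaking perturbation guarantees unique demand in both markets, the internal parts of the demanded bundles match exactly. Consequently agent $i$'s new offers on internal trades are the same, and her external offers in $\market$ do not affect any agent of $J$ (they face only external counterparts outside $J$).

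I would carry this out as follows: first set up the internal/external bundle decomposition and the equality of internal offers as an inductive hypothesis along the best response sequence; second prove the demand-correspondence identity above; third argue that identical demands on internal trades yield identical best-response updates and identical changes to the unsatisfied set $U \cap \Omega^{int}$, closing the induction. The main obstacle I anticipate is bookkeeping around the unsatisfied-agent sets: I must verify that whenever a best response by some $i \in J$ marks a neighbor as unsatisfied in the restricted market, the corresponding neighbor in $\market$ (which lies in $J$, since the modified trade is internal) is marked identically, and that external best responses never disturb this alignment. This requires carefully tracking that offers on external trades stay frozen at $\offer^{-i}_\omega$ throughout, so that the endogenization in \eqref{eq:restricted-valuation} remains valid at every step, rather than only at the start.
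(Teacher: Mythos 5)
Your proposal is correct and follows essentially the same route as the paper's proof: an induction along the best response sequence maintaining that internal offers coincide, with the key step being the partial-maximization identity showing that the demand in $\widehat{\market}$ is exactly the internal part of the demand in $\market$, followed by the same bookkeeping that external offer changes only affect counterparts outside $J$ and that no agent outside $J$ acts during the sequence. The point you flag as a potential obstacle (that the frozen prices $\offer^{-i}_\omega$ remain valid throughout) is handled exactly as you anticipate, since only agents in $J$ best respond in the sequence.
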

\begin{proof}
Fix a best response sequence for $\widehat{\market}(J,\offers)$ in state $(\widehat{U}, \widehat{\offers})$. So this sequence contains only agents from set $J$. We will now argue that, after the best response of the $k$-th agent, i) the current offers for every trade $\omega \in \Omega^{int}$ are identical in both markets; and ii) the same subset of $J$ is satisfied in both markets $\market$ and $\widehat{\market}(J,\offers)$.

We argue by induction on $k$ for the first statement. Let $i \in J$ be the agent performing the $k$th best response, and assume that the statement holds for the offers $\offers$ and $\widehat{\offers}$ in the respective markets immediately prior to the $k$th best response. (It is true by definition for $k=1$.) The agent's utility for any bundle $\Theta \subseteq \Omega_i^{int}$ at $\widehat{\offers}$ in market $\widehat{\market}$ (with valuation $\widehat{v}^i$) is
\[
\widehat{v}^i(\Theta) - \sum_{\omega \in \Theta} \chi^i_\omega \offer^{-i}_{\omega} = v^i(\Theta \cup \Psi) - \sum_{\omega \in \Theta \cup \Psi} \chi^i_{\omega}\offer^{-i}_{\omega},
\]
for appropriate $\Psi \subseteq \Omega_i^{ext}$, by~\eqref{eq:restricted-valuation}. This is also her utility for bundle $\Theta \cup \Psi$ at $\offers$ in $\market$ with valuation $v^i$. So this implies that she demands bundle $\Phi$ at $\offers$ in market $\market$ if and only if she demands $\Phi^{int}$ at $\widehat{\offers}$ in the market $\widehat{\market}$. Thus her best response in both markets will make the same offers for all trades $\Phi^{int}$.

For the second statement, recall that an agent becomes satisfied when she best responds, so agents in $J$ become satisfied at the same time in both markets. Secondly, an agent in $J$ changes from satisfied to unsatisfied only if one of her trading partners in $J$ modifies his offers. But the first statement tells us that the offers for trades $\Omega^{int}$ change identically in both markets, so agents also become unsatisfied at the same time in both markets.
\end{proof}

\paragraph{Merging agent sets.}
Next, fix market $\market = (I, \Omega, v)$ with at least two agents, and partition $I$ into non-empty sets $J^1$ and $J^2$.
Moreover, for both $k \in \{1,2\}$, define $\Omega^k = \{ \omega \in \Omega \mid b(\omega), s(\omega) \in J^k \}$ as the set of trades in $\Omega$ with both endpoints in $J^k$, and let $\Omega^{1,2}$ be all other trades, with one endpoint in $J^1$ and the other endpoint in $J^2$.
We now construct a two-agent market $\widetilde{\market}(J^1, J^2) = (\{1,2\}, \widetilde{\Omega}^{1,2}, \widetilde{v})$ from $\market$ in which $J^1$ and $J^2$ are merged into single agents, denoted $1$ and $2$.

For each trade $\omega \in \Omega^{1,2}$ of $\market$, we write $\widetilde{\omega}$ for the same trade with the endpoint in $J^k$ replaced by agent $k$. For any set $\Psi \subseteq \Omega^{1,2}$, we define $\widetilde{\Psi} \coloneqq \{ \widetilde{\omega} \mid \omega \in \Psi \}$, so in particular there is a natural one-to-one correspondence between the trades $\Omega^{1,2}$ of market $\market$ and the trades $\widetilde{\Omega}^{1,2}$ of the two-agent market. Finally, the valuation $\widetilde{v}^k$ of each agent $k \in \{1,2\}$ is
\begin{equation}
\label{eq:merged-valuation}
\widehat{v}^k(\widetilde{\Phi}) \coloneqq \max_{\Psi \subseteq \Omega^k} \sum_{i \in J^k} v^i(\Phi \cup \Psi), \quad \forall \widetilde{\Phi} \subseteq \widetilde{\Omega}^{1,2}.
\end{equation}
This definition captures the notion of a merger of $J^k$, as the agents in this set can conduct internal trades for free (as the cost of buyer and the income of the seller cancel out). So the set of agents $J^k$ chooses a set $\Psi \subseteq \Omega^k$ of `interior' trades that maximizes its aggregate valuation function. See also \citet[Section 5.2]{hatfield2015full}, where the valuations of merged agents is first introduced. In particular, \citet[Theorem 7]{hatfield2015full} show that merging multiple fully-substitutable agents results in a fully-substitutable agent, so agents $1$ and $2$ in the two-agent market are both fully-substitutable.

For any set $U$ of unsatisfied agents for $\market$, let $\widetilde{U} \subseteq \{1,2\}$ with $k \in \widetilde{U}$ iff $U \cap J^k \neq \emptyset$. Moreover, for any offers $\offers$ for $\market$, we define offers $\widetilde{\offers}$ for $\widetilde{\market}(J^1, J^2)$ by $\widetilde{\offer}^{b(\widetilde{\omega})}_{\widetilde{\omega}} \coloneqq {\offer}^{b({\omega})}_{{\omega}}$ and $\widetilde{\offer}^{s(\widetilde{\omega})}_{\widetilde{\omega}} \coloneqq {\offer}^{s({\omega})}_{{\omega}}$ for all $\omega \in \Omega^{1,2}$.
We now show that a best response sequence involving agents from $J^k$ starting with market $\market$ in state $(U, \offers)$ ends with the same offers for $\Omega^{1,2}$ as if we let the agent $k$ best respond in the two-agent market in state $(\widetilde{U}, \widetilde{\offers})$.

\begin{lemma}
\label{lemma:merged-market}
Suppose $\market$ in state $(U, \offers)$ admits a BR sequence of agents from $J^k \subseteq I$ after which all agents in $J^k$ are satisfied, and let $\offers'$ be the offers after this sequence.
Then $\widetilde{\offers}'$ are the offers after a single best response from agent~$k$ in the market $\widetilde{\market}(J^1, J^2)$ in state $(\widetilde{U}, \widetilde{\offers})$.
\end{lemma}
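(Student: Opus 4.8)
The plan is to reduce the whole claim to a single statement about demand: that the set $\Phi^\ast\subseteq\Omega^{1,2}$ of crossing trades that the agents $J^k$ collectively \emph{accept} in the terminal state $\offers'$ coincides, under the correspondence $\omega\leftrightarrow\widetilde\omega$, with the crossing part of the demand $\widetilde D^k(\pb)$ of the merged agent $k$ at the frozen prices $\pb$. First I would observe that, since only agents of $J^k$ best respond, the offers of their $J^{-k}$-counterparts on the crossing trades $\Omega^{1,2}$ never change; call these fixed values $\pb$. By the definition of $\widetilde\offers$, the merged agent $k$ faces exactly these prices $\pb$ when she best responds in $\widetilde\market(J^1,J^2)$. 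Since both the merged agent and every agent of $J^k$ form their offers by the \emph{same} rule (match the counterpart on demanded trades, move by $\varepsilon$ otherwise; cf.~\cref{definition:best-response}), and since the $J^{-k}$-side offers agree by construction, it suffices to prove that $\Phi^\ast$ equals the crossing part of $\widetilde D^k(\pb)$.

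Next I would characterise the terminal state. As all agents of $J^k$ are simultaneously satisfied, the state is a joint fixed point of their best responses: each agent $i\in J^k$ demands her realised bundle $\Phi^i$ at the offers she faces. A short fixed-point case analysis shows that each internal trade in $\Omega^k$ is either \emph{accepted by both} endpoints (offers coincide, trade executes) or \emph{rejected by both} (offers differ by exactly $\varepsilon$), the mixed cases being inconsistent. Hence the executed internal trades $T^\ast$ and the accepted crossing trades $\Phi^\ast$ assemble into a single aggregate allocation of $J^k$ in which every agent's part is precisely her demand.

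The heart of the proof is to show that this terminal allocation is \emph{efficient}, i.e.\ that $(T^\ast,\Phi^\ast)$ maximises the aggregate welfare $\sum_{i\in J^k} v^i(T\cup\Phi)-\sum_{\omega\in\Phi}\chi^k_\omega p_\omega$ over all $T\subseteq\Omega^k$ and $\Phi\subseteq\Omega^{1,2}$. I would use that this objective is $M^\natural$-concave---being the merge of the fully-substitutable valuations $v^i$ \citep[Theorem 7]{hatfield2015full}---so that \emph{local} optimality (no single trade can be added, dropped, or swapped to increase it) already implies \emph{global} optimality. Local optimality at the fixed point is where the $\varepsilon$-gap structure and integrality enter: for any non-executing internal trade, the buyer's rejection at the high offer and the seller's rejection at the low offer bound the aggregate value of adding it by strictly less than $\varepsilon$, so by integrality adding it cannot help; dually, each executed or accepted trade was accepted by the relevant agent(s), so dropping it cannot help. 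Given efficiency, the maximiser's crossing part is, by the very definition~\eqref{eq:merged-valuation} of $\widetilde v^k$, exactly $\widetilde D^k(\pb)$, and since demand is unique under the tie-breaking rule, $\Phi^\ast$ equals the crossing part of $\widetilde D^k(\pb)$, closing the argument.

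The main obstacle is this efficiency step, and specifically making the local-to-global reduction airtight: the per-agent best responses are taken at \emph{different} effective prices on the non-executing trades (the two ends of each $\varepsilon$-gap), so one cannot simply quote the first welfare theorem for a single price vector. The delicate case is to rule out a beneficial \emph{exchange} that simultaneously drops one trade and adds another, for which I would lean on the exchange property of $M^\natural$-concave functions combined with the trade-by-trade $(<\varepsilon)$-plus-integrality bound above. An alternative and perhaps cleaner route is to pass through the restricted market $\widehat\market(J^k,\offers)$ via \cref{lemma:restricted-market} and apply the aggregation identity for $M^\natural$-concave functions, under which the merged agent's demand is exactly the aggregate demand of $J^k$ once the internal market clears; the same efficiency fact is then the one nontrivial ingredient.
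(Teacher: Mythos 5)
Your proposal follows essentially the same route as the paper's proof: identify the merged agent's utility for a crossing bundle with the aggregate utility of the agents $J^k$ (the paper's \eqref{eq:single-agent-utility}); observe that in the terminal state every internal trade is either accepted by both endpoints or rejected by both, so the realised allocation is a well-defined bundle $\Xi$; conclude that the accepted crossing trades equal the crossing part of the merged agent's unique demand because both are maximisers of the same aggregate objective \eqref{eq:merged-valuation}--\eqref{eq:aggregate-utility}; and finish by matching the offer-update rule of \cref{definition:best-response} on both sides, using that the offers of agents outside $J^k$ never change. Each of these is a step of the paper's argument, and your handling of the last step (same rule, same faced prices, same demand, hence same offers) is exactly the paper's.

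The one genuine gap is the step you yourself single out: that the terminal allocation maximises the aggregate objective. You correctly diagnose why this is not a one-line welfare-theorem argument: summing each agent's optimality condition at the prices she actually faces yields, for a competing consistent bundle $\Theta$, only the bound $W(\Xi)\ge W(\Theta)-\varepsilon\, m'(\Theta)$, where $W$ denotes the aggregate utility with internal payments cancelled and $m'(\Theta)$ counts the non-executing internal trades included in $\Theta$ (on such a trade the buyer is charged the seller's offer while the seller is credited the buyer's offer, and these differ by $\varepsilon$). You propose to close this with the exchange property of $M^\natural$-concave functions plus integrality, but you do not carry that argument out, so the crux of the lemma remains a sketch. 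For what it is worth, the paper itself compresses this step into the single clause ``as the respective bundles $\widetilde{\Phi}$ and $\Xi$ are utility-maximizing,'' asserting aggregate optimality of $\Xi$ directly from individual satisfaction; so you have not missed an idea the paper supplies, but rather surfaced a subtlety the paper elides, without resolving it. Note also that your suggested alternative route via \cref{lemma:restricted-market} does not obviously sidestep the issue, since the same aggregation identity for the internally cleared market is the nontrivial ingredient there too. To make the proposal a complete proof you would need to actually establish the local-to-global step, e.g.\ by showing that any strictly improving $\Theta$ can be transformed into $\Xi$ by single-trade exchanges each of which is weakly improving at some agent's faced prices, contradicting strict (tie-broken) individual optimality.
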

\begin{proof}
Let $\widetilde{\Phi}$ be the bundle that agent $k$ in market $\widetilde{\market}(J^1,J^2)$ demands at $\widetilde{\offers}'$, for some $\Phi \subseteq \Omega^{1,2}$. The agent's utility for $\widetilde{\Phi}$ at $\widetilde{\offers}'$ is, by construction of the two-agent market, given by
\begin{align}
\label{eq:single-agent-utility}
    &\sum_{i \in J^k} v^i(\Phi \cup \Psi) - \sum_{\widetilde{\omega} \in \widetilde{\Phi}} \chi^k_{\widetilde{\omega}} (\widetilde{\offer}')^{-k}_{\widetilde{\omega}} \nonumber \\
    = &\sum_{i \in J^k} \left [ v^i(\Phi \cup \Psi) - \sum_{\omega \in \Phi}\chi^i_{\omega} (\offer')^{-i}_{\omega} \right ],
\end{align}
for the bundle $\Psi \subseteq \Omega^k$ that uniquely maximizes \eqref{eq:merged-valuation}.

As the agents $J^k$ are satisfied after the best response sequence, the two agents associated with each trade in $\Omega^k$ either both demand, or both don't demand, the trade. So let $\Xi \subseteq \Omega$ be the bundle of trades aggregately demanded by agents $J^k$. As the payments for the trades in $\Omega^k$ between buyer and seller cancel out, the aggregate utility of agents $J^k$ in $\market$ for $\Xi$ at $\offers'$ is
\begin{equation}
\label{eq:aggregate-utility}
\sum_{i \in J^k} \left [ v^i(\Xi) - \sum_{\omega \in \Xi \cap \Omega^{1,2}}\chi^i_{\omega} (\offer')^{-i}_{\omega} \right ].
\end{equation}
As the respective bundles $\widetilde{\Phi}$ and $\Xi$ are utility-maximizing, it follows from \eqref{eq:single-agent-utility} and \eqref{eq:aggregate-utility} that ${\Phi} = \Xi \cap \Omega^{1,2}$.

We now use this fact to prove the main claim: $(\widetilde{\offer}')^{b(\widetilde{\omega})}_{\widetilde{\omega}} \coloneqq {(\offer')}^{b({\omega})}_{{\omega}}$ and $(\widetilde{\offer}')^{s(\widetilde{\omega})}_{\widetilde{\omega}} \coloneqq {(\offer')}^{s({\omega})}_{{\omega}}$ for all ${\omega \in \Omega^{1,2}}$. Because only agent $k$ best responds in $\widetilde{\market}(J^1, J^2)$ and only agents from $J^k$ best respond in $\market$, the offers made by all other agents are unchanged so this statement is clear for these offers. We now consider the offers of agent $k$ and agents $J^k$, respectively.

Fix some $\omega \in \Omega^{1,2}$, and let $j$ be the agent of $\omega$ in $J^k$. By the above, we have $\offer^{-j}_\omega = (\offer')^{-j}_\omega$. Moreover, as agent $j$ is satisfied after the best response sequence, our definition of best responses tells us that $(\offer')^{j}_\omega = (\offer')^{-j}_\omega$ if $\omega \in \Xi$ and $(\offer')^{j}_\omega = (\offer')^{-j}_\omega - \chi^j_\omega$ otherwise. Similarly, as the agent $k$ of market $\widetilde{\market}(J^1, J^2)$ is satisfied after best responding, we have $\widetilde{\offer}^{-k}_{\widetilde{\omega}} = (\widetilde{\offer}')^{-k}_{\widetilde{\omega}}$; and $(\widetilde{\offer}')^{k}_{\widetilde{\omega}} = (\widetilde{\offer}')^{-k}_{\widetilde{\omega}}$ if $\widetilde{\omega} \in \widetilde{\Phi}$ and $(\widetilde{\offer}')^{k}_{\widetilde{\omega}} = (\widetilde{\offer}')^{-k}_{\widetilde{\omega}} - \chi^k_{\widetilde{\omega}}$ otherwise. As $\offer^{-j}_{\omega} = \widetilde{\offer}^{-k}_{\widetilde{\omega}}$ by definition of $\widetilde{\offer}$, the result follows from $\Phi = \Xi \cap \Omega^{1,2}$.
\end{proof}

We are now ready to prove \cref{proposition:convergence-guarantees}.
\begin{proof}[Proof of \cref{proposition:convergence-guarantees}]
Fix $m$-sparse market $\market$ in state $(U^0, \offers^0)$. We prove the theorem by induction on $|I|$. Assume without loss of generality that $|I|$ is a power of two, by add isolated agents if necessary. The base case $|I|=1$ is immediate, so suppose $|I| \geq 2$. Partition $I$ into two non-empty sets $J^1$ and $J^2$ so that $\Omega$ contains at most $m$ trades with endpoints in both $J^1$ and $J^2$. Such a partition exists because the market is $m$-sparse. For notational convenience, also define $J^k \coloneqq J^{k \bmod 2}$ for any $k \geq 1$. We construct the best response sequence $S=S^1 S^2 S^3 \ldots$ of agents from $I$ by concatenating subsequences $S^k$ containing only agents from $J^1$ when $k$ is odd, and only agents from $J^2$ when $k$ is even. After each such subsequence containing agents $J^k$, we guarantee that \textit{all} agents in $J^k$ are satisfied, so our best response sequence will alternate between satisfying the agents in $J^1$ and in $J^2$. The sequence terminates if all agents in $I$ are satisfied simultaneously.

We define the subsequences $S^k$ recursively as follows. Fix ${k \geq 1}$, and let $U^{k-1}$ and $\offers^{k-1}$ be the unsatisfied agents and offers after applying best response sequence $S^1 \ldots S^{k-1}$ to market $\market$ in state $(U^0, \offers^0)$. If $U^{k-1} = \emptyset$, then ${S \coloneqq S^1 \ldots S^{k-1}}$ and we are done. Otherwise, we consider the restricted market $\widehat{\market}(J^{k}, \offers^{k-1})$, which is FS if $\market$ is. By induction hypothesis (as ${|J^k| < |I|}$), this market $\widehat{\market}(J^{k}, \offers^{k-1})$ in state $(\widehat{U}^{k-1}, \widehat{\offers}^{k-1})$ admits a (finite) terminating best response sequence $S^k$. And by \cref{lemma:restricted-market}, all agents $J^k$ are satisfied in market $\market$ after applying this sequence $S^k$ to $\market$ in state $(U^{k-1}, \offers^{k-1})$.

We now prove that $S$ consists of at most $f(m,V)$ subsequences $S^k$. Suppose $S$ has $K \leq \infty$ subsequences. We consider the two-agent market $\widetilde{\market}(J^1, J^2)$ obtained by merging agent sets $J^1$ and $J^2$. \Cref{lemma:merged-market} implies that $\widetilde{\offers}^k$ are the offers after applying $k$ best responses in this two-agent market in state $(\widetilde{U}^0, \widetilde{\offers}^0)$, starting with agent $1$. After every subsequence $S^{k}$ with $k \leq K-1$, at least one offer facing the agents in $J^{k-1}$ has changed, as at least one agent in $J^{k-1}$ is newly unsatisfied. So in the corresponding two-agent market, the agent $k-1 \bmod 2$ is unsatisfied after the $k$th best response, for any $k \leq K-1$. Similarly, both agents are satisfied after the $K$th best response in the two-agent market, so the two-agent market terminates after exactly $K$ best responses (if $K=\infty$, it does not terminate). But by assumption, the two-agent market with at most $m$ trades terminates after at most $f(m,V)$ best responses, so $K \leq f(m,V)$.
\end{proof}

\begin{corollary}
Any $2$-sparse FSM converges almost surely to equilibrium.  Moreover, if \cref{conjecture:two-agent-convergence} holds, then any market with FS agents converges almost surely.
\end{corollary}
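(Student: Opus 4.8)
The plan is to derive both statements as immediate consequences of the reduction in \cref{proposition:convergence-guarantees}, together with a finite-Markov-chain argument that upgrades the \emph{existence} of a terminating best response sequence from every state into \emph{almost sure} convergence. The overall structure is threefold: first, establish that a terminating best response sequence exists from every reachable state; second, bound the state space and show that the dynamic follows such a sequence with positive probability; third, conclude almost sure termination by a Borel--Cantelli-type argument.

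For the first statement, I would apply \cref{proposition:two-agents-convergence}, which guarantees that two-agent FSMs with two trades converge, and feed this into \cref{proposition:convergence-guarantees}(ii) with $m = 2$. This yields that every $2$-sparse FSM, \emph{in any state}, admits a terminating best response sequence. For the second statement, I would first observe that every finite market is $|\Omega|$-sparse, since any bipartition of a subgraph's vertices crosses at most $|\Omega|$ trades; hence, assuming \cref{conjecture:two-agent-convergence}, two-agent FSMs with $|\Omega|$ trades converge, and \cref{proposition:convergence-guarantees}(ii) again provides a terminating best response sequence from any state. In both cases the crucial point is that the reduction supplies such a sequence not merely from the initial configuration but from \emph{every} state the dynamic can reach.

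The remaining work is the probabilistic step. By \cref{lemma:offers-are-bounded}, every offer stays within a bounded integer interval, so the set of possible states $(U, \offers)$ is finite and the dynamic is a Markov chain on this finite set. From each state there is a terminating sequence; taking the longest over the finitely many (shortest) such sequences gives a uniform length bound $L$, and since each best response selects a uniformly random unsatisfied agent, the probability of following any fixed length-$L$ sequence is at least $(1/|I|)^{L} =: p > 0$. Thus from any state the dynamic terminates within $L$ steps with probability at least $p$, independently of the past by the Markov property, so the probability of surviving $nL$ steps without terminating is at most $(1-p)^{n} \to 0$. Hence the dynamic converges almost surely.

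I expect the only genuine subtlety to be the quantification ``in any state'' in \cref{proposition:convergence-guarantees}: the argument breaks if a terminating sequence were guaranteed only from the initial configuration, because the random dynamic may wander into states from which we have no guarantee. Since the proposition is proved for an arbitrary starting state, this obstacle is already handled upstream, and the uniform lower bound $p$ on the per-window termination probability then follows from finiteness of the state space. The verification that a general finite market is $m$-sparse for $m = |\Omega|$ is routine, and is only needed to invoke \cref{conjecture:two-agent-convergence} in the second statement.
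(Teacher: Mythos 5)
Your proposal is correct and follows essentially the same route as the paper: combine \cref{proposition:two-agents-convergence} (or \cref{conjecture:two-agent-convergence}) with the reduction in \cref{proposition:convergence-guarantees} to obtain a terminating best response sequence from any state, then conclude almost-sure convergence from the uniformly random selection of agents. You merely make explicit two points the paper leaves implicit --- that every market is $|\Omega|$-sparse, and the finite-state-space/geometric-decay argument behind ``almost surely selects, eventually, a terminating sequence'' --- which is a welcome but not substantively different elaboration.
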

\begin{proof}

Suppose two-agent markets converge. This holds for $m$-sparse topologies with $m \leq 2$ by \cref{proposition:two-agents-convergence}, and by assumption for $m \geq 3$ by \cref{conjecture:two-agent-convergence}. Then by \cref{proposition:convergence-guarantees}, any market with FS agents, in any state, admits a finite best response sequence after which the market dynamic terminates. As the dynamic selects the next agent to best respond uniformly at random, it will almost surely select, eventually, a best response sequence that results in an equilibrium.
\end{proof}

\section{Network details}

\subsection{Network Construction}
\subsubsection{BS networks}

We consider a basic network, with $B$ buyer nodes and $S$ seller nodes. Each buyer node $b \in B$ is connected to each seller node $s \in S$ with some probability $r$. Trades can only be made between connected nodes in all networks.

\subsubsection{BIS networks}

We consider a basic network, with $B$ buyers,  $S$ sellers, and $I$ intermediaries. Each buyer $b \in B$ is connected to each intermediary $i \in I$ with some probability $r$, and likewise, each seller $s \in S$ is connected to each intermediary with some probability $r$. Buyers/sellers are not directly connected in this network, and only transact through an intermediary.

\subsubsection{General networks}\label{appendixNetwork}

We consider Erdős-Rényi (ER) networks $G(N, \lambda)$, where $N$ is the number of traders, and $r=\frac{\lambda}{N}$ is the probability of each possible connection being made between the traders (i.e.,$\lambda=N$ means that the graph is fully connected, and $\lambda \approx 1$ means that the graph is sparsely connected and tree-like).

The graph construction process is as follows:

\begin{enumerate}
    \item Create a random undirected ER network $G(N, \lambda)$.
    \item Consider only the largest component of $G(N, \lambda)$, dropping isolated nodes and smaller sub-graphs.
    \item Treat each internal node as an intermediary agent.
    \item Randomly assign leaf nodes to be a buyer or seller (with probability $0.5$ each).
\end{enumerate}

Trades are constructed both ways between two intermediaries (e.g., they are both interested in being the buyer or the seller).

Example networks are displayed in \cref{figERExamples}.

\begin{figure}[H]
\centering
 \begin{subfigure}[b]{0.45\columnwidth}
    \includegraphics[width=\textwidth]{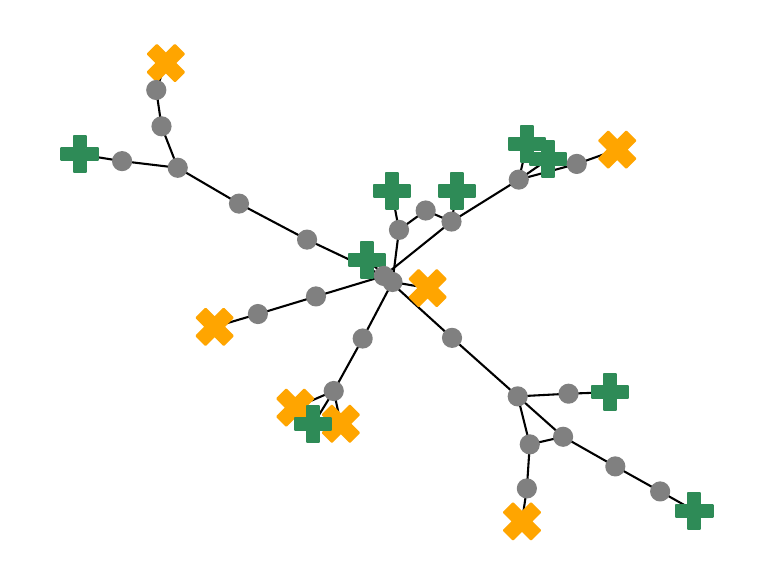}
        \caption{$\lambda=1$}
     \end{subfigure}
     \hfill
     \begin{subfigure}[b]{0.45\columnwidth}
         \centering
         \includegraphics[width=\textwidth]{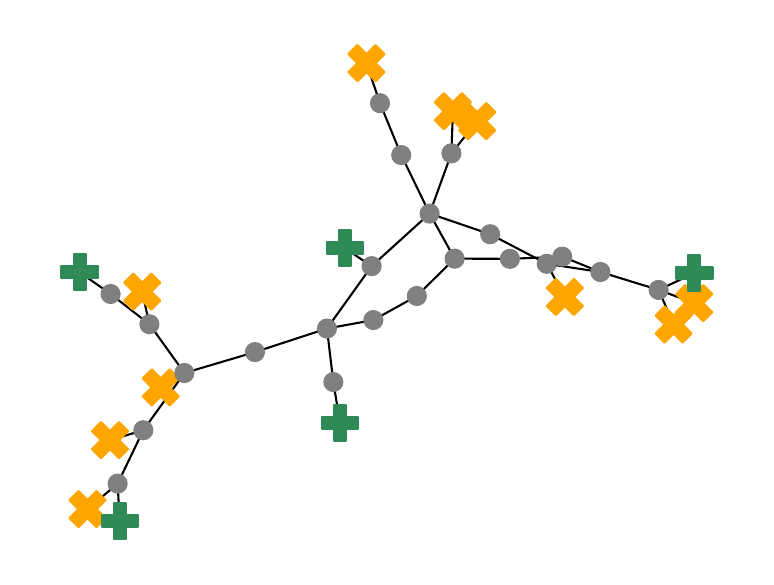}
         \caption{$\lambda=1.5$}
     \end{subfigure}
     \hfill
          \begin{subfigure}[b]{0.45\columnwidth}
         \centering
         \includegraphics[width=\textwidth]{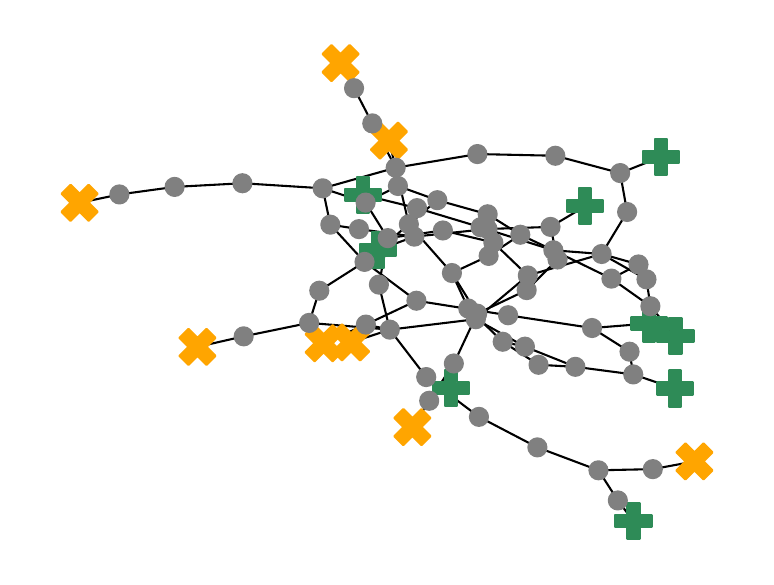}
         \caption{$\lambda=2$}
     \end{subfigure}
      \hfill
               \begin{subfigure}[b]{0.45\columnwidth}
         \centering
         \includegraphics[width=\textwidth]{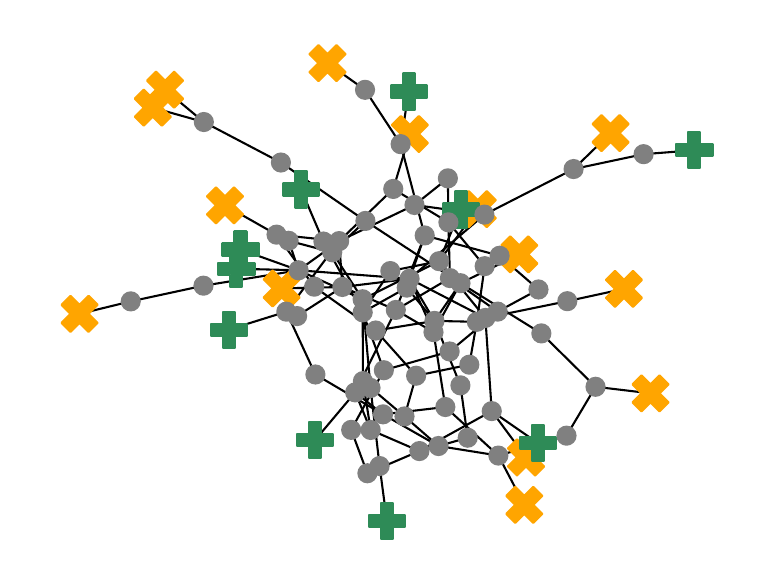}
         \caption{$\lambda=2.5$}
     \end{subfigure}
      \hfill
               \begin{subfigure}[b]{0.18\textwidth}
         \centering
         \includegraphics[width=\textwidth]{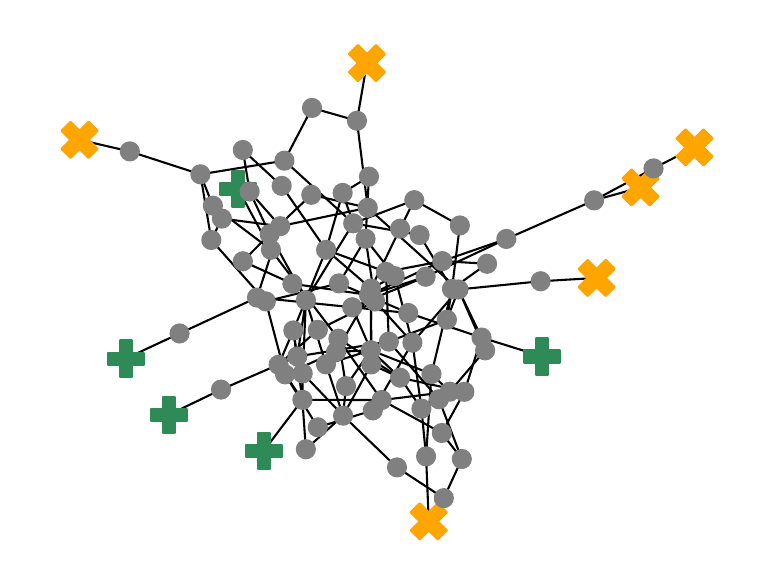}
         \caption{$\lambda=3$}
     \end{subfigure}
     \hfill
    \caption{Example ER networks with $N=100$. Buyers (sellers) are shown as green $+$ (orange $\times$). Intermediaries are gray $\cdot$.}
    \label{figERExamples}
\end{figure}

\section{Additional Plots}

\begin{figure*}[!htb]
\centering
\begin{subfigure}[b]{0.3\textwidth}
\centering
\includegraphics[width=\textwidth]{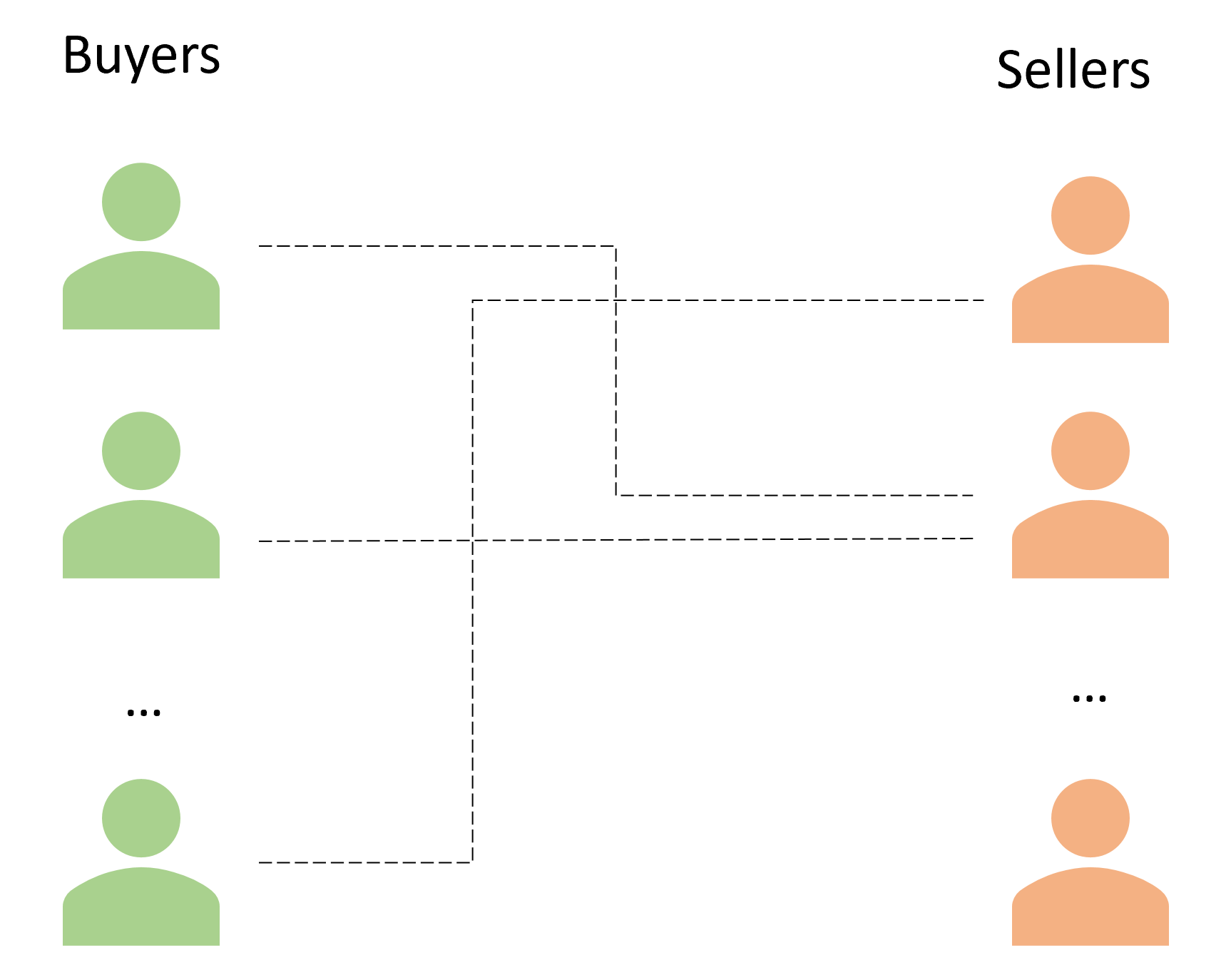}
\caption{BS}\label{figBS}
\end{subfigure}
\hfill
\begin{subfigure}[b]{0.3\textwidth}
\centering
\includegraphics[width=\textwidth]{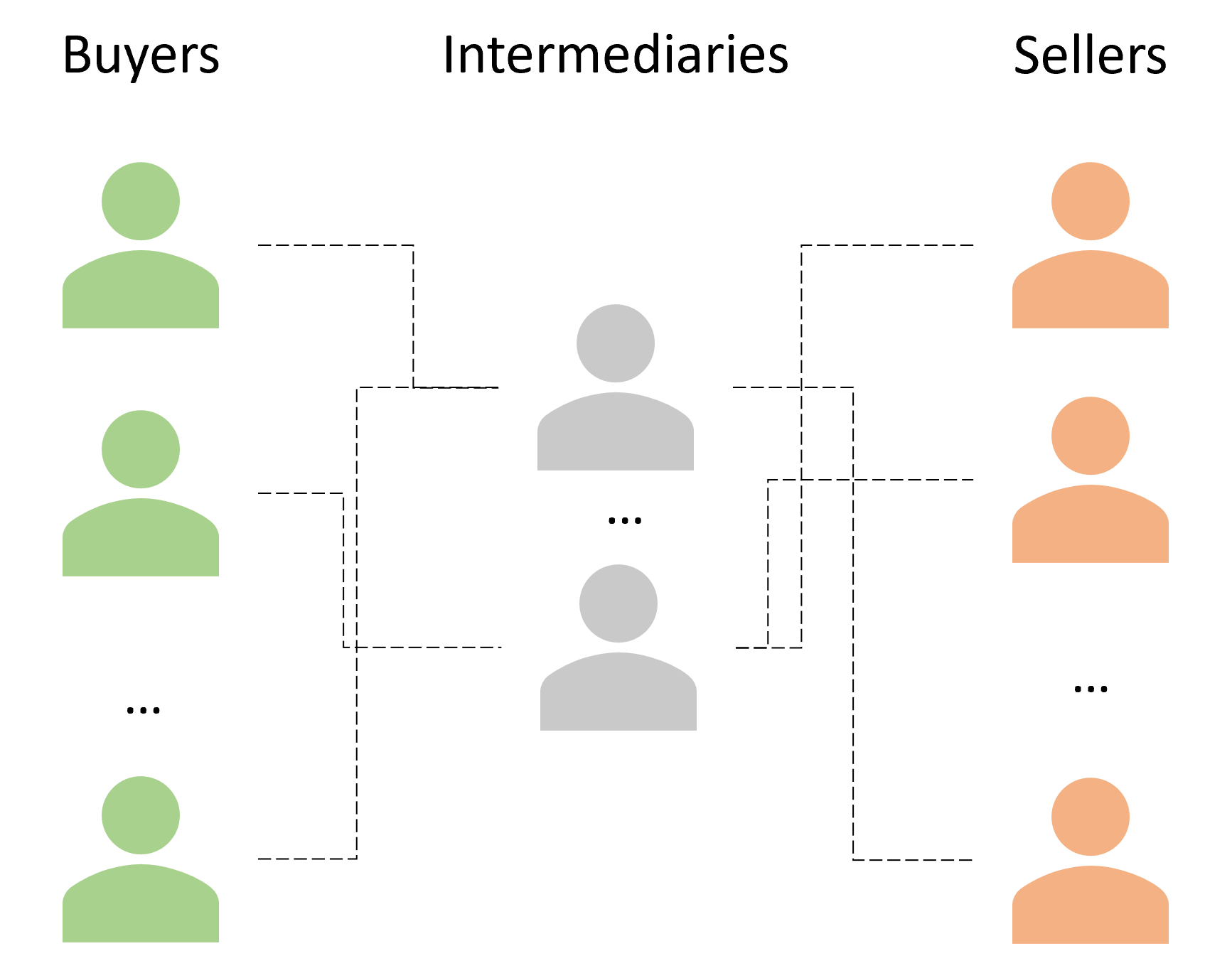}
\caption{BIS}\label{figBIS}
\end{subfigure}
\hfill
\begin{subfigure}[b]{0.3\textwidth}
\centering
\includegraphics[width=\textwidth]{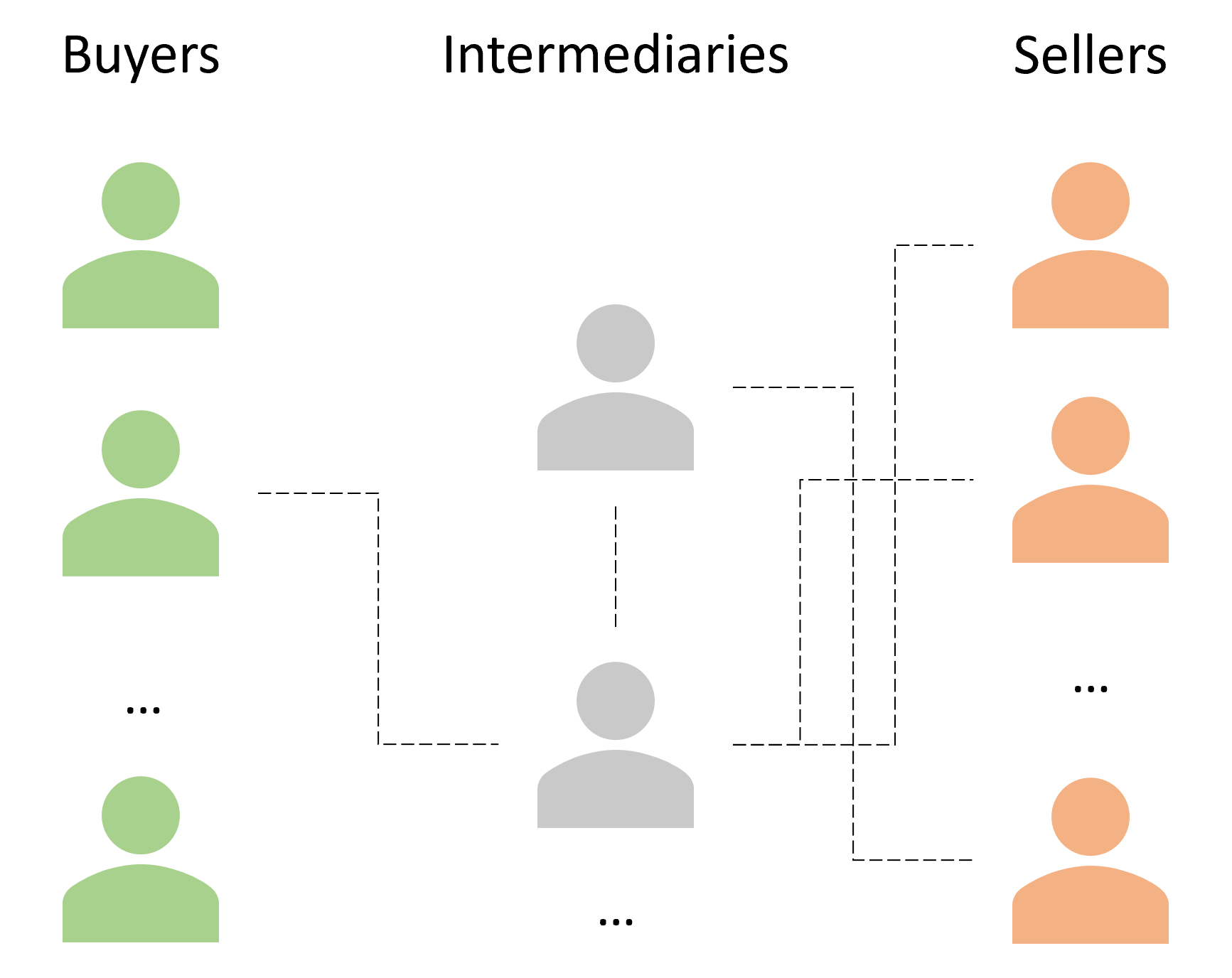}
\caption{General}\label{figGeneral}
\end{subfigure}
\caption{Example network topologies. }
\label{figNetworks}
\end{figure*}

\begin{figure*}[!htb]
    \centering
    \begin{subfigure}[b]{0.32\textwidth}
    \includegraphics[width=\textwidth]{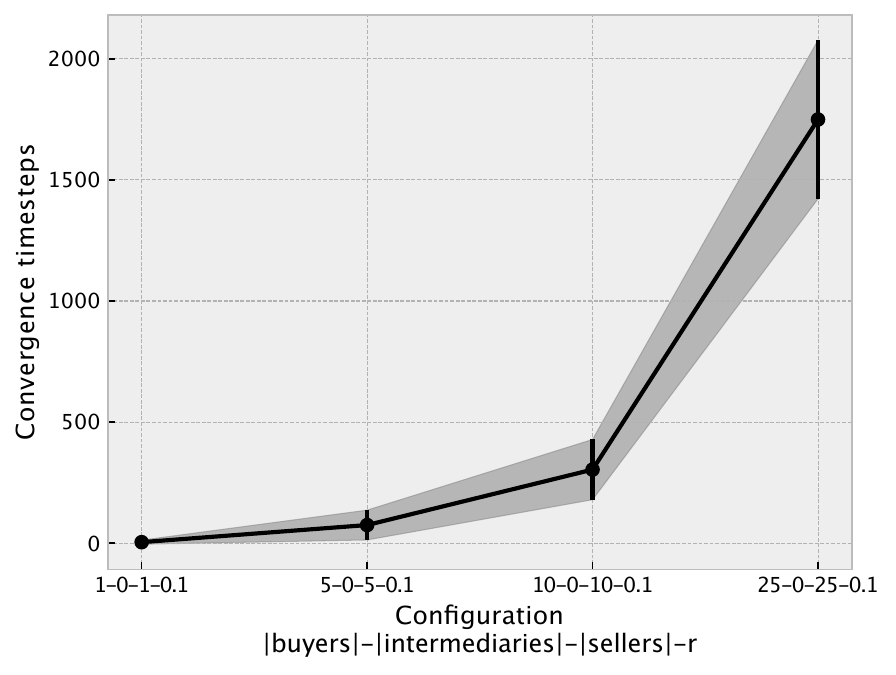}
    \caption{BS}
        \label{figConvergenceRatesBS}
     \end{subfigure}
     \hfill
         \begin{subfigure}[b]{0.32\textwidth}
    \includegraphics[width=\textwidth]{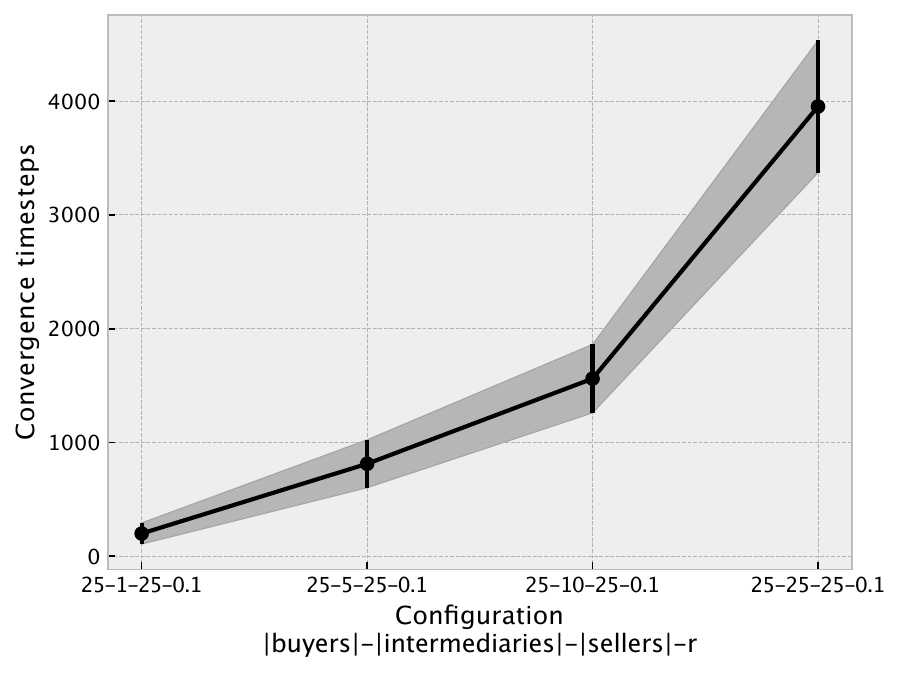}
    \caption{BIS}
        \label{figConvergenceRatesBIS}
     \end{subfigure}
     \hfill
         \begin{subfigure}[b]{0.32\textwidth}
    \includegraphics[width=\textwidth]{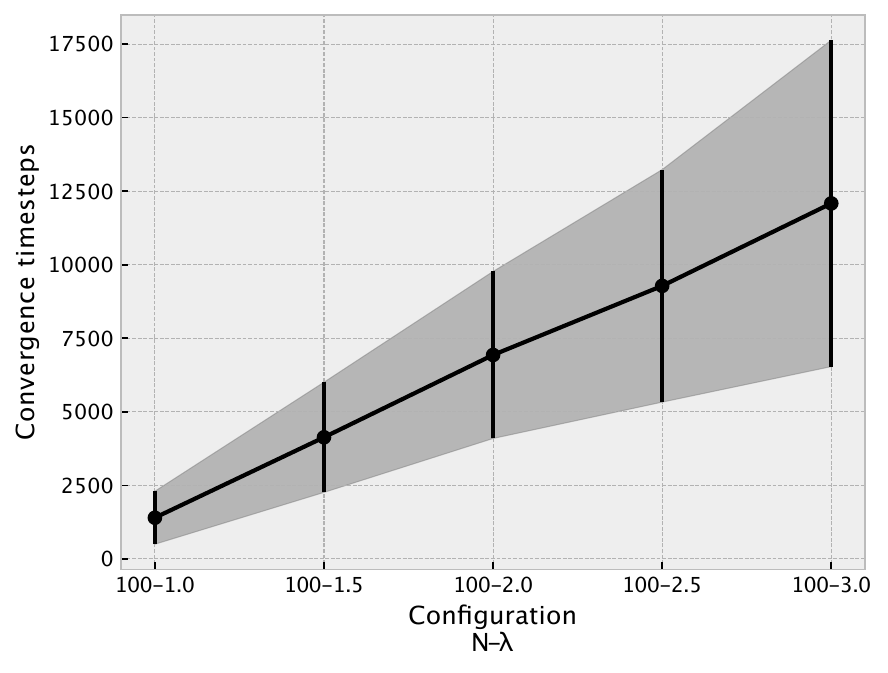}
    \caption{General}
        \label{figConvergenceRatesGeneral}
     \end{subfigure}
     \hfill
    \caption{Convergence rates by market topology. The $x$-axis varies the size of the market, and the y-axis shows the resulting time (number of best response updates) to convergence.}
    \label{figConvergenceRates}
\end{figure*}

\begin{figure*}[!tb]
    \centering
    \begin{subfigure}[b]{0.45\textwidth}
    \includegraphics[width=0.88\textwidth]{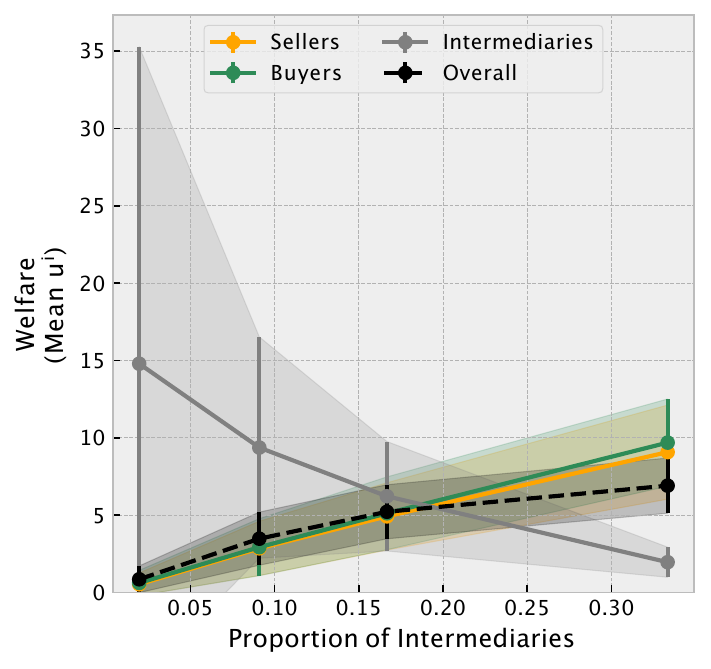}
     \caption{BIS Network. The $x$-axis varies the proportion of intermediaries. As the proportion of intermediaries increases, their welfare decreases, and the welfare of the buyers/sellers increases.}
    \end{subfigure}
     \hfill
    \begin{subfigure}[b]{0.45\textwidth}
    \centering
\includegraphics[width=0.88\textwidth]{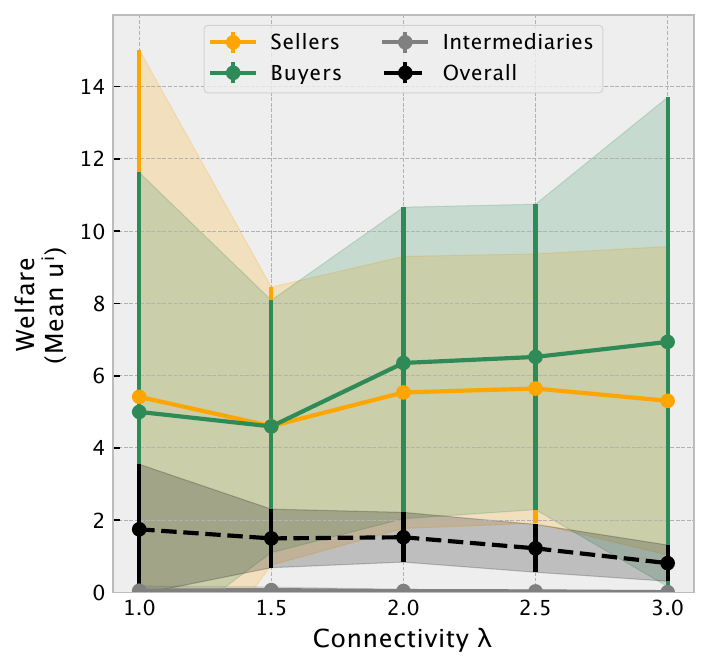}
    \caption{General network. The $x$-axis varies the connectivity, showing the welfare is relatively constant across configurations.}
    \label{figGeneralWelfare}
     \end{subfigure}
      \hfill
    \caption{Additional social welfare (mean $u_i$) plots by agent composition.}
    \label{figAdditionalWelfare}
\end{figure*}

To support the experimental findings, we include additional plots. Example network topologies are visualized in \cref{figNetworks}, and the convergence rates based on market topologies across different sizes is shown in \cref{figConvergenceRates}. Social welfare across BIS and General networks is given in \cref{figAdditionalWelfare}.

\end{document}